\newcommand{\VS}[1]{\mathcal{#1}}
\def\@nameedef#1{\expandafter\edef\csname #1\endcsname}
\newcommand{\St}{~.~}
\newcommand{\Braces}[1]{\left\{#1\right\}}
\newcommand{\Set}[2]{\Braces{#1~|~#2}}
\newcommand{\Size}[1]{\left|#1\right|}
\newcommand{\ApproxSize}[3]{\textsc{MC}_{#2,#3}\pars{#1}}
\newcommand{\Equiv}[2]{\left[#1\right]_{#2}}
\newcommand{\EquivVS}[2]{\left[#1\right]_{\VS{#2}}}
\newcommand{\Relax}[2]{{#1}\rvert_{#2}}
\newcommand{\RelaxVS}[2]{{#1}\rvert_{\VS{#2}}}
\newcommand{\Repl}[3]{#1[#2 \to #3]}
\newcommand{\IfSpecRelax}[2]{\ifthenelse{\equal{#1}{}}{#2}{\Relax{#2}{#1}}}
\newcommand{\IfSpecRelaxVS}[2]{\ifthenelse{\equal{#1}{}}{#2}{\RelaxVS{#2}{#1}}}
\newcommand{\Count}[5][]{\Size{#2\pars{\IfSpecRelax{#1}{#3}, \VS{#4}, \VS{#5}}}}
\newcommand{\CountVS}[5][]{\Size{#2\pars{\IfSpecRelaxVS{#1}{#3}, \VS{#4}, \VS{#5}}}}
\newcommand{\ApproxCount}[7][]{\ApproxSize{#2\pars{\IfSpecRelax{#1}{#3}, \VS{#4}, \VS{#5}}}{#6}{#7}}
\newcommand{\ApproxCountVS}[7][]{\ApproxSize{#2\pars{\IfSpecRelaxVS{#1}{#3}, \VS{#4}, \VS{#5}}}{#6}{#7}}
\newcommand{\counting}[1]{\#{#1}}
\newcommand{\SAT}{\cproblem{SAT}}
\newcommand{\MC}{\counting{\SAT}}
\newcommand{\MaxMC}{\cproblem{Max\MC{}}}
\newcommand{\BaxMC}{\textsc{BaxMC}}
\newcommand{\MaxCount}{\textsc{MaxCount}}
\newcommand{\pars}[1]{\left(#1\right)}
\newcommand{\CEGAR}[0]{\textsc{CEGAR}}
\renewcommand{\P}[1]{\mathbb{P}\left[#1\right]}
\newcommand{\Models}[2]{\mathcal{M}_{\VS{#1}}\pars{#2}}
\newcommand{\cproblem}[1]{\textrm{#1}}
\title{\textsc{BaxMC}: a CEGAR approach to \MaxMC{} \thanks{This work was partially supported by the French ANR project TAVA (ANR-20-CE25-0009) and by the LabEx PERSYVAL-Lab (ANR-11-LABX-0025-01) funded by the French program Investissements d’avenir.
}}
\author{\IEEEauthorblockN{Thomas Vigouroux\IEEEauthorrefmark{1}\orcid{0000-0001-6396-0285}, Cristian Ene\IEEEauthorrefmark{1}\orcid{0000-0001-6322-0383}, David Monniaux\IEEEauthorrefmark{1}\orcid{0000-0001-7671-6126},  Laurent Mounier\IEEEauthorrefmark{1}, Marie-Laure Potet\IEEEauthorrefmark{1}\orcid{0000-0002-7070-6290}}
\IEEEauthorblockA{\IEEEauthorrefmark{1}Univ. Grenoble Alpes, CNRS, Grenoble INP, VERIMAG, 38000 Grenoble, France
 \\ \texttt{Firstname.Lastname@univ-grenoble-alpes.fr}}
}
\theoremstyle{plain}
\newtheorem{lemma}{Lemma}[section]
\newtheorem{theorem}{Theorem}[section]
\newtheorem{corollary}{Corollary}[section]
\theoremstyle{remark}
\newtheorem{remark}{Remark}[section]
\theoremstyle{definition}
\newtheorem{definition}{Definition}[section]
\newtheorem{property}{Property}[section]
\crefname{property}{Property}{Properties}
\newcommand{\doublecolnewline}{\\}
\begin{document}

\maketitle

\begin{abstract}
  \MaxMC{} is an important problem with multiple applications in security and program synthesis that
  is proven hard to solve.
  It is defined as: given a parameterized quantifier-free propositional formula, compute parameters such that the number of models of the formula is maximal.
  As an extension, the formula can include an existential prefix.

  We propose a CEGAR-based algorithm and refinements thereof, based on either exact or approximate model counting, and prove its correctness in both cases.
  Our experiments show that this algorithm has much better effective complexity than the state of the art.
\end{abstract}

\section{Introduction}

\MC{} is the problem of counting the solutions of a quantifier-free propositional formula, the counting version of the \SAT{} problem.
\MaxMC{} is the problem of optimizing, according to some propositional variables, the number of solutions according to the others.
We generalize this problem to allow an existential prefix in the formula.

This problem has many practical applications in diverse areas of computer science such as \textit{quantitative program analysis} and \textit{program synthesis} \cite{fremont2017maxcount}.
Most approaches for quantitative information flow analysis use approximations, with fast yet imprecise solutions.
Adaptive attacker synthesis \cite{saha2021incatksynth} would also benefit from advances in \MaxMC{} efficiency, mainly by being able to avoid the use of imprecise heuristics.

Unfortunately, \MaxMC{} has high complexity \cite{Monniaux_MAXSHARPSAT,DBLP:journals/jacm/Toran91}, and practical solving methods remain costly.
At the time of writing, only one solver is publicly available off-the-shelf~\cite{fremont2017maxcount}.

Earlier work on the \MaxMC{} problem proposed two approaches. The first is a
probabilistic solving method \cite{fremont2017maxcount}, which unfortunately degrades to exhaustive search when seeking precise answers to the problem.
The second approach
\cite{audemard2022softcores} solves the problem exactly, but scales poorly.

We present in this paper a new approach to \MaxMC{}, leveraging ideas from \CEGAR{} solvers, and show its effectiveness on various benchmarks used in previous publications on the subject.
We also present improvements of our algorithm based on previous work about symmetry breaking in \SAT{} solvers~\cite{metin2018cdclsym}.

Our contributions are the following:
\begin{itemize}
  \item An \emph{effective algorithm} to compute maximal solutions for the projected model counting problem (\cref{sec:mainalgo,sec:generalization}).
        This algorithm relies either on an \emph{exact} projected model counter as a subprocedure, or on an \emph{approximated} one, which should be the case most times in practice for scalability reasons.
        A complete correctness proof of this algorithm is given for both cases.
  \item The \emph{extension} of our algorithm with SAT \emph{symmetry breaking} techniques
        (\cref{sec:symmetries}) and \emph{heuristics} (\cref{sec:heuristics}), to further improve its efficiency.
  \item The \emph{implementation} of this algorithm in the tool \BaxMC~\cite{baxmc}, together with a
        set of \emph{experimental results} (\cref{sec:evaluation}), showing the accuracy and
        performances of our \MaxMC{} algorithm on various benchmarks, with respect to the only other
        available tool.
\end{itemize}

\section{Preliminaries}

We set our problem in standard Boolean logic.
Throughout the paper, Greek letters ($\phi$, $\psi$, \dots) denote Boolean formulas, uppercase
calligraphic Latin letters ($\VS{V}$, $\VSX$, $\VSY$, $\VSZ$, \dots) denote sets of variables,
simple uppercase Latin letters ($V$, $X$, $Y$, $Z$, \dots) denote variables, lowercase variants of these letters ($x$, $y$, $z$, \dots) denote valuations for these sets of variables.

Let $\mathbb{B} = \left\{ \emph{true}, \emph{false}\right\}$.
A \emph{literal} is a variable or its negation and the set of literals derived from a set of
variables $\VSV$ is denoted by $\overline{\VSV} = \VSV \cup \lnot \VSV$.
Let $\phi(\VS{V})$ be a Boolean formula over $\VSV$ a set of variables.
A valuation $v: \VS{V} \to \mathbb{B}$ is a \emph{model} of $\phi$ if $\phi$ evaluates to \emph{true} over $v$; this is denoted by $v \models \phi$.

We say that a formula $\phi$ is \textit{satisfiable} if there exists $v$ such that $v \models \phi$.
Otherwise, $\phi$ is deemed \textit{unsatisfiable}. Determining whether a formula is satisfiable or
unsatisfiable is called the \textit{satisfiability} problem, also known as \SAT{}.

The \emph{restriction} of a valuation $v: \VS{V} \to \mathbb{B}$ to $\VS{E} \subseteq
  \VS{V}$ is denoted by $\RelaxVS{v}{E}$. We say that two valuations $v_1$ and $v_2$ \textit{agree} on
$\VS{E}$, denoted by $ v_1 \sim_{\VS{E}} v_2$, if their restrictions to $\VS{E}$ are equal.

\subsection{Base definitions}

\begin{definition}[Equivalence class]
  \label{def:equiv}
  Given a valuation $v$ and a set $\VS{E}$, we call \textit{equivalence class of $v$ over $\VS{E}$} the
  set of valuations that agree with $v$ over $\VS{E}$, that is:
  \[
    \EquivVS{v}{E} = \Set{v'}{v' \sim_{\VS{E}} v}
  \]

  We call $\RelaxVS{v}{E}$ \textit{partial} valuations, and $v$
  \textit{complete} valuations.
  The elements of $\EquivVS{v}{E}$ are called the \emph{extensions} or $\RelaxVS{v}{E}$.
\end{definition}

\begin{definition}
  Given propositional formula $\phi(\VS{V})$ and $\VS{E} \subseteq \VS{V}$,  $\Models{E}{\phi} = \Set{\RelaxVS{v}{E}}{v \models \phi}$ denotes the \textit{set
    of models projected over $\VS{E}$}.
\end{definition}

\begin{remark}
  We omit the set $\VS{E}$ when it contains all the variables of $\phi$. That is  $\Models{}{\phi} = \Models{V}{\phi}$.
\end{remark}

\begin{definition}
  Given a valuation $v$, we define the \emph{update} of the variable $V \in \VS{V}$ to $b$ as:
  \[
    \Repl{v}Vb(X) = \left\{ \begin{array}{lr}
      v(X) & \text{if } X \neq V \\
      b & \text{otherwise}
    \end{array}\right.
  \]
\end{definition}

\begin{definition}
  Given two formulas $\phi(\VS{V})$ and $\psi(\VS{V})$, we say that $\phi$
  \emph{entails} $\psi$ (denoted by $\phi \models \psi$) if $\Models{}{\phi} \subseteq
    \Models{}{\psi}$.
\end{definition}

\subsection{Domain-specific definitions}

In the remaining of the paper we consider a \emph{partition} of $\VS{V}$ over three sets $\VS{X}$, $\VS{Y}$ and $\VS{Z}$, respectively
called \textit{witness}, \textit{counting} and \textit{intermediate} variables.
Given a Boolean formula $\phi(\VS{X}, \VS{Y}, \VS{Z})$, we define \MaxMC{} as an optimization problem stated as follows: find
$x_m \in \Models{X}{\phi}$ such that the projected model counting over $\VS{Y}$ of the formula (which will be defined later) is maximal.

\newcommand{\Induced}[3][]{I_{#2}\pars{\IfSpecRelax{#1}{#3}}}
\newcommand{\InducedVS}[3][]{I_{#2}\pars{\IfSpecRelaxVS{#1}{#3}}}
\begin{definition}[Induced set]
  \label{def:induced}
  Given a formula $\phi(\VS{X}, \VS{Y}, \VS{Z})$ and $x \in \Models{X}{\phi}$, the set
  of models over $\VS{Y}$ \textit{induced} by $x$ is:
  \[
    \Induced{\phi}{x} = \Set{y \in \Models{Y}{\phi}}{\exists z \St (x, y, z) \models \phi}
  \]
  We extend this definition to partial witnesses as follows: $\InducedVS[E]{\phi}x = \bigcup_{x' \in
      \EquivVS{x}{E}} \Induced{\phi}{x'}$.
\end{definition}

\begin{definition}[Model counting]
  \label{def:partialprojMC}
  Given a formula $\phi(\VS{X}, \VS{Y}, \VS{Z})$, the \textit{count} of a witness $x$ is defined by the size
  of the set it induces $\Count{\phi}xYZ = \Size{\Induced{\phi}{x}}$.

  We extend this definition to partial witnesses as follows $\CountVS[\VSE]{\phi}xYZ =
    \Size{\InducedVS[E]{\phi}{x}}$.
\end{definition}

\begin{definition}[\MaxMC]
  Given formula $\phi(\VS{X}, \VS{Y}, \VS{Z})$, we can state the \MaxMC{} problem more formally as finding
  $x_m \in \Models{X}{\phi}$ such that:
  \[
    \Count{\phi}{x_m}YZ = \max_{x \in \Models{X}{\phi}}\Count{\phi}xYZ
  \]
\end{definition}

\begin{property}
  \label{prop:partial_bound_complete}
  Given a formula $\phi(\VS{X}, \VS{Y}, \VS{Z})$, the count of a partial witness is an upper-bound of the count of
  its extensions:
  \[
    \forall x' \in \EquivVS{x}{E}, \Count{\phi}{x'}YZ \leq \CountVS[E]{\phi}xYZ
  \]
\end{property}

\begin{proof}
  This follows directly from \cref{def:induced} on induced sets.
\end{proof}

\begin{property}[Monotony of model counting]
  \label{prop:monotonous_count}
  Given a propositional formula $\phi(\VS{X}, \VS{Y}, \VS{Z})$, $\VS{A} \subseteq \VS{B} \subseteq \VS{X}$, and $x \in
    \Models{X}{\phi}$, the count of partial witnesses is monotonous:
  \[
    \CountVS[B]{\phi}xYZ \leq \CountVS[A]{\phi}xYZ
  \]
\end{property}

\begin{proof}
  First, following \cref{def:equiv} we have:
  \[
    \EquivVS{x}{B} \subseteq \EquivVS{x}{A}
  \]

  Hence, following \cref{def:induced}:
  \[
    \InducedVS[B]{\phi}{x} \subseteq \InducedVS[A]{\phi}{x}
  \]

  And that finishes the proof.
\end{proof}

\begin{property}
  \label{prop:relaxing_bounds_each}
  Given a Boolean formula $\phi(\VS{X}, \VS{Y}, \VS{Z})$ such that $x \in \Models{X}{\phi}$, $\VS{E} \subseteq \VS{X}$ and $X_i \in
    \VS{X}$, we have:
  \begin{multline*}
    \CountVS[E]{\phi}xYZ \leq \Count[\VSE - \Braces{X_i}]{\phi}xYZ \leq \\
    \CountVS[E]{\phi}xYZ + \Count{\phi}{\Repl{\RelaxVS{x}{E}}{X_i}{\lnot x(X_i)}}YZ
  \end{multline*}

\end{property}

\begin{proof}
  The first inequality is a direct consequence of \cref{prop:monotonous_count}.
  The last inequality follows from \cref{def:induced}.
\end{proof}

\begin{property}
  \label{prop:phi_goes_down}
  For a given $\phi(\VS{X}, \VS{Y}, \VS{Z})$ and $\phi'(\VS{X}, \VS{Y}, \VS{Z})$ such that $\phi \models \phi'$, a witness $x$, and $\VS{E} \subseteq \VS{X}$ we have:
  \[
    \CountVS[E]{\phi}xYZ \leq \CountVS[E]{\phi'}xYZ
  \]
\end{property}

\begin{proof}
  For any $y \in \InducedVS[E]{\phi}x$, as $y \models \phi$, and $\phi \models \phi'$, we get $y
    \models \phi'$ and hence $\InducedVS[E]{\phi}x \subseteq \InducedVS[E]{\phi'}x$.
\end{proof}

\begin{property}
  \label{prop:stable_psi}
  Given $\phi(\VS{X}, \VS{Y}, \VS{Z})$, $\psi(\VS{X})$ and $x \models \psi$, we have:
  \[
    \Size{\phi(x, \VS{Y}, \VS{Z})} = \Size{(\phi \land \psi)(x, \VS{Y}, \VS{Z})}
  \]
\end{property}

\begin{proof}
  Since $x \models \psi$ and $\psi$ does not depend on $\VS{Y}$ and $\VS{Z}$, $(x, y, z) \models \phi$ if and only if
  $(x, y, z) \models \phi \land \psi$.
\end{proof}

\newcommand{\complexityclass}[1]{\mathsf{#1}}
\newcommand{\Pclass}{\complexityclass{P}}
\newcommand{\nondetclass}[1]{\complexityclass{N}{#1}}
\newcommand{\NP}{\nondetclass{\Pclass}}
\newcommand{\majorityclass}[1]{\complexityclass{P}{#1}}
\newcommand{\PP}{\majorityclass{\Pclass}}
\newcommand{\existsclass}[1]{\exists{#1}}
\newcommand{\sharpP}{\counting{\Pclass}}
\newcommand{\oracleclass}[2]{{#1}^{#2}}
\newcommand{\limitedoracleclass}[3]{{#1}^{#2[#3]}}
\newcommand{\sneg}[1]{\overline{#1}}
\newcommand{\ve}[1]{\mathbf{#1}}
\newcommand{\countmodels}[1]{\Size{\Models{}{#1}}}
\newcommand{\countset}[1]{\Size{{\left\{#1\right\}}}}
\newcommand{\size}[1]{\left|#1\right|}

\section{Solving \MaxMC{}}
\label{sec:mainalgo}

This section presents the main algorithm we propose to solve the \MaxMC{} problem.

\subsection{The main algorithm}

\cref{alg:cegar} takes as input a formula $\phi(\VS{X}, \VS{Y}, \VS{Z})$ and
computes a pair $(x_m, n_m)$ such that $x_m$ is a solution to \MaxMC{} for $\phi$ with model
counting $n_m$. Together with the formula, the algorithm takes multiple precision parameters:
\begin{itemize}
  \item $(\epsilon_i)$ that are called \emph{tolerance} parameters~\cite{chakraborty2013approxmc};
  \item $(\delta_i)$ that are called \emph{confidence} parameters~\cite{chakraborty2013approxmc};
  \item $\kappa$ that is called the \emph{persistence} parameter.
\end{itemize}

Further explanations about these parameters will be given later.

Roughly speaking, this algorithm consists in iterating over possible \emph{witnesses} $x$ of $\phi$.
If the model count for $x$ is less than the current best solution, it
\emph{blocks generalizations} of $x$ such that all extensions of these generalizations are
\emph{worse} than the current best solution (\cref{alg:cegar:generalize,alg:cegar:block}), hence
removing a chunk of the search space at each iteration.
Otherwise, it saves the candidate, which is then the new maximum, and blocks it
(\cref{alg:cegar:newmax,alg:cegar:block}), removing only one candidate from the search space.

\begin{algorithm}[tb]
  \begin{algorithmic}[1]
    \Function{BaxMC$_{\epsilon_0, \epsilon_1, \delta_0, \delta_1, \delta_2,\kappa}$}{$\phi(\VS{X}, \VS{Y}, \VS{Z})$}
    \State $\phi_s \gets \phi$
    \State $x_m \gets \top$
    \State $n_m \gets 0$
    \State $N \gets \ApproxCount{\phi}{\emptyset}YZ{\epsilon_0}{\delta_0}$ \label{alg:cegar:mccall_1}
    \While{$n_m < \frac{N}{1 + \kappa}$}

    \State $x \xleftarrow{\$} \mathcal{M}_X(\phi_s)$ \label{alg:cegar:newsol}
    \Comment{Pick a new candidate}
    \State $c \gets \ApproxCount{\phi_s}xYZ{\epsilon_1}{\delta_1}$ \label{alg:cegar:mccall_2}

    \If{$c > n_m$}
    \Comment{New maximum}
    \State $x_m \gets x$
    \label{alg:cegar:newmax}
    \State $n_m \gets c$
    \State $\VS{E} \gets \VS{X}$
    \Else
    \Comment{Find generalization}
    \State $\VS{E} \gets \Call{Generalize$_{\delta_2}$}{x, \phi_s, n_m}$
    \label{alg:cegar:generalize}
    \EndIf
    \State $\phi_s \gets \phi_s \land \lnot \pars{\RelaxVS{x}{E}}$
    \label{alg:cegar:block}
    \Comment{Block}
    \State $N \gets \ApproxCount{\phi_s}{\emptyset}YZ{\epsilon_0}{\delta_0}$ \label{alg:cegar:mccall_3}
    \EndWhile

    \State \Return $x_m, n_m$
    \EndFunction
  \end{algorithmic}
  \caption{Pseudocode for the \BaxMC algorithm}
  \label{alg:cegar}
\end{algorithm}

We use two kinds of oracles in this algorithm. At \cref{alg:cegar:newsol} we call a \SAT{} solver.
Calls to an existing \MC{} oracle (\cref{alg:cegar:mccall_1,alg:cegar:mccall_2,alg:cegar:mccall_3}) can be performed using either an exact or an approximate model counter.
In the latter case the precision parameters taken as input of the algorithm are used to configure
the oracle, and influence the correctness of the returned value (in the former case, simply assume that they are all~$0$).

\begin{definition}
  \label{prop:n_bound}
  Given $\phi(\VS{X}, \VS{Y}, \VS{Z})$, $x \in \Models{X}{\phi}$ we
  say that $\VS{E} \subseteq \VS{X}$ is $n$-bounding if $\CountVS[E]{\phi}xYZ \leq n$.
\end{definition}

The $\textsc{Generalize}$ function used in \cref{alg:cegar} at \cref{alg:cegar:generalize} is
proved to return $n_m$-bounding sets in both the exact (\cref{thm:gen:correct}) and the
approximate case (with probability $1 - \delta$, \cref{thm:gen:approx}).
The $\textsc{Generalize}$ function is called \cref{alg:elimination} in this paper and it will be presented
in \cref{sec:generalization}.

\subsection{Termination and correctness with an exact \MC{} oracle}

In this subsection, each $i$-indexed variable of the algorithm denotes its value
at the end of the $i$-th iteration of the main loop.
In the exact version of the algorithm, all precision parameters are assumed to be equal to $0$ and all calls to
$\ApproxCount{\phi}xYZ{0}{0}$ return $\Count{\phi}xYZ$.

\label{sec:main:proof}

\begin{theorem}[Termination with an exact \MC{} oracle]
  \cref{alg:cegar} always terminates.
\end{theorem}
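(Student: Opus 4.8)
The plan is to exhibit a non-negative integer quantity that strictly decreases at every pass through the main loop, so that the loop cannot run forever. The natural candidate is the number of witnesses still admitted by the running formula, $\mu(\phi_s) = \Size{\Models{X}{\phi_s}}$. Since $\mu(\phi_s) \le 2^{\Size{\VSX}}$ and $\mu(\phi_s) \in \mathbb{N}$, it suffices to prove that each full iteration of the \textbf{while} loop strictly decreases $\mu$; a strictly decreasing sequence of natural numbers is finite, which gives termination directly.

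Before arguing the decrease I would first settle well-definedness, namely that the random draw $x \xleftarrow{\$} \Models{X}{\phi_s}$ on \cref{alg:cegar:newsol} always has something to return whenever the loop body executes. If $\phi_s$ were unsatisfiable we would have $\Models{X}{\phi_s} = \emptyset$, and the exact oracle on \cref{alg:cegar:mccall_1,alg:cegar:mccall_3} would return $N = 0$; the guard $n_m < \frac{N}{1+\kappa}$ would then read $n_m < 0$, which is impossible since $n_m \ge 0$ holds throughout (it starts at $0$ and is only ever reassigned to a count $c \ge 0$). Hence entering the loop body guarantees $\Models{X}{\phi_s} \neq \emptyset$, so $x$ is a genuine witness of $\phi_s$.

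The core step is the strict decrease. In both branches the formula is updated on \cref{alg:cegar:block} to $\phi_s \land \lnot\pars{\RelaxVS{x}{E}}$ with $\VS{E} \subseteq \VSX$: on the new-maximum branch (\cref{alg:cegar:newmax}) we set $\VS{E} = \VSX$, so $\RelaxVS{x}{E} = x$ is the complete witness just drawn; on the other branch (\cref{alg:cegar:generalize}) the set $\VS{E}$ is produced by $\textsc{Generalize}$ and is by specification a subset of $\VSX$. In either case $x \sim_{\VS{E}} \RelaxVS{x}{E}$ holds trivially, and a witness $x'$ survives in $\Models{X}{\phi_s \land \lnot\pars{\RelaxVS{x}{E}}}$ iff it survived in $\Models{X}{\phi_s}$ and $x' \not\sim_{\VS{E}} x$ (the extra conjunct is independent of the counting and intermediate variables). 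Thus conjoining $\lnot\pars{\RelaxVS{x}{E}}$ removes exactly the witnesses agreeing with $x$ on $\VS{E}$ — a set containing $x$ itself, which was present by the previous paragraph. Therefore $\mu$ drops by at least one.

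Finally I would conclude that the loop executes at most $\Size{\Models{X}{\phi}} \le 2^{\Size{\VSX}}$ times and halts. The main obstacle is not the counting but nailing down the two structural facts the measure rests on: that the drawn candidate is always killed by its own blocking clause, which needs $\VS{E} \subseteq \VSX$ so that $x$ is one of the extensions of $\RelaxVS{x}{E}$ (guaranteed here by the contract of $\textsc{Generalize}$), and that a candidate can always be drawn, handled by the non-emptiness argument above. With the exact oracle both are immediate, since every oracle call returns the true value and $N = 0$ is equivalent to $\phi_s$ being unsatisfiable.
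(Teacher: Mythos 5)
Your proof is correct and rests on the same core fact as the paper's: each iteration strictly decreases $\Size{\Models{X}{\phi_s}}$ because the blocking clause $\lnot\pars{\RelaxVS{x}{E}}$ with $\VS{E}\subseteq\VSX$ eliminates at least the freshly drawn witness $x$, and once that count reaches zero the exact oracle yields $N=0$ so the guard $n_m < N/(1+\kappa)$ fails. The paper packages this with an auxiliary decreasing term $N_i - {n_m}_i$, but your single measure suffices and you justify the two supporting facts (non-emptiness on entry, strict decrease) more explicitly than the paper's ``by construction.''
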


\begin{proof}
  By construction of $\pars{{\phi_s}_i}_i$ we have:
  \[
    \forall i > 0, 0 \leq \Size{\Models{X}{{\phi_s}_{i+1}}} < \Size{\Models{X}{{\phi_s}_i}}
  \]
  The sequence $\pars{{n_m}_i}_i$ is obviously increasing. From \cref{prop:phi_goes_down}, the
  sequence $\pars{N_i}_i$ is decreasing and hence $\pars{N_i - {n_m}_i}_i$ is decreasing.

  Putting all this together, $\pars{\Size{\Models{X}{{\phi_s}_i}} + \pars{N_i - {n_m}_i}}_i$ is strictly
  decreasing.

  One can easily see that whenever $\Size{\Models{X}{{\phi_s}_i}} = 0$ it follows that $N_i = 0$ and
  $N_i - {n_m}_i \leq 0$. Hence in all cases, after some iteration $k$, $N_k - {n_m}_k \leq 0$ and
  the termination follows.
\end{proof}

\begin{remark}
  The worst case complexity of \cref{alg:cegar} is reached when it iterates over all the witnesses
  of the formula.
\end{remark}

Let $k$ be the number of iterations performed when \cref{alg:cegar} terminates,
then we have ${n_m}_k \geq N_k$.

\begin{lemma}
  \label{lemma:loop_invariant_complete}
  At every iteration $i$ of \cref{alg:cegar}, we have:
  \[
    \Models{X}{{\phi_s}_i} = \Models{X}{\phi} - \bigcup_{j < i} \Equiv{x_i}{\VS{E}_i}
  \]
\end{lemma}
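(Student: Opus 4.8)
The plan is to proceed by induction on the iteration index $i$, exploiting that the only modification made to $\phi_s$ during an iteration is the blocking conjunction at \cref{alg:cegar:block}, namely $\phi_s \gets \phi_s \land \lnot\pars{\RelaxVS{x}{E}}$ (the choice between $\VS{E} = \VS{X}$ and a generalized $\VS{E}$ only affects which equivalence class is removed, not the shape of the argument). The base case is immediate: before any blocking ${\phi_s} = \phi$ and the union is empty, so both sides equal $\Models{X}{\phi}$.

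For the inductive step the heart of the argument is the following claim: for any formula $\psi$, witness $x$ and $\VS{E} \subseteq \VS{X}$,
\[
  \Models{X}{\psi \land \lnot\pars{\RelaxVS{x}{E}}} = \Models{X}{\psi} \setminus \EquivVS{x}{E}.
\]
I would establish this by unfolding the definitions. The partial valuation $\RelaxVS{x}{E}$ is a conjunction of literals over $\VS{E} \subseteq \VS{X}$, so its negation $\lnot\pars{\RelaxVS{x}{E}}$ is satisfied by a witness $x'$ exactly when $x' \not\sim_{\VS{E}} x$, i.e.\ when $x' \notin \EquivVS{x}{E}$. The key point is that this blocking clause constrains only $\VS{X}$-variables (indeed only those in $\VS{E}$) and is independent of $\VS{Y}$ and $\VS{Z}$. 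Hence for a projected model $x' \in \Models{X}{\psi}$, any satisfying extension $(x', y, z) \models \psi$ remains a model of $\psi \land \lnot\pars{\RelaxVS{x}{E}}$ precisely when $x' \notin \EquivVS{x}{E}$, and conversely no new witnesses can appear. Thus conjoining the blocking clause before projecting removes exactly the equivalence class $\EquivVS{x}{E}$ from the projected models, and nothing more.

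Combining this claim with the induction hypothesis then closes the argument: writing ${\phi_s}_i = {\phi_s}_{i-1} \land \lnot\pars{\RelaxVS{x_i}{E_i}}$, the claim gives $\Models{X}{{\phi_s}_i} = \Models{X}{{\phi_s}_{i-1}} \setminus \EquivVS{x_i}{E_i}$; substituting the inductive description of $\Models{X}{{\phi_s}_{i-1}}$ and using that set difference distributes over unions yields the stated equality.

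I expect the only delicate point to be the projection/blocking interaction in the claim above, namely justifying that removing the equivalence class at the level of witnesses coincides with conjoining the partial-valuation clause and \emph{then} projecting onto $\VS{X}$. Everything rests on the blocking clause depending solely on the $\VS{E} \subseteq \VS{X}$ variables, so that it can neither discard an extension of a surviving witness nor manufacture a new one; once this observation is spelled out, the induction itself is routine.
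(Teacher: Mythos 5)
Your proof is correct and takes the same route as the paper: the paper disposes of this lemma with the one-line remark that it ``follows by construction of ${\phi_s}_i$'', and your induction plus the key observation that the blocking clause $\lnot\pars{\RelaxVS{x}{E}}$ depends only on variables in $\VS{E} \subseteq \VS{X}$ (so conjoining it commutes with projection onto $\VS{X}$) is exactly the content that remark abbreviates. Nothing is missing; if anything, your version makes explicit the projection/blocking interaction that the paper leaves implicit.
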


\begin{proof}
  This follows by construction of ${\phi_s}_i$.
\end{proof}

\begin{lemma}\label{lemma:bounded_union}
  At every iteration $i$ of \cref{alg:cegar}, and assuming $\Call{Generalize}{x, \phi, n}$ returns
  $n$-bounding generalizations of $x$ (as defined in \cref{prop:n_bound}) we have:
  \[
    \forall x' \in \bigcup_{j \leq i} \Equiv{x_j}{\VS{E}_j}, \Count{\phi}{x'}YZ \leq {n_m}_i
  \]
\end{lemma}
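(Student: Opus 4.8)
The plan is to argue by induction on the iteration index $i$, with the base case $i = 0$ being vacuous since before the loop the union $\bigcup_{j \le 0}\Equiv{x_j}{\VS{E}_j}$ is empty and ${n_m}_0 = 0$. Before doing the step, I would isolate the one structural fact that makes everything work: every clause appended to $\phi_s$ has the shape $\lnot(\RelaxVS{x_j}{E_j})$ and hence mentions only witness variables. Writing ${\phi_s}_{i-1} = \phi \land \psi$ with $\psi(\VS{X}) = \bigwedge_{j \le i-1}\lnot(\RelaxVS{x_j}{E_j})$, \cref{prop:stable_psi} gives
\[
  \Count{\phi}{x'}YZ = \Count{{\phi_s}_{i-1}}{x'}YZ \quad\text{whenever } x' \models \psi .
\]
Moreover $x' \models \psi$ holds exactly when $x' \notin \bigcup_{j \le i-1}\Equiv{x_j}{\VS{E}_j}$, because $x'$ falsifies the $j$-th blocking clause iff $x' \sim_{\VS{E}_j} x_j$.

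This count-preservation identity is the step I expect to be the real obstacle. The guarantee delivered by $\textsc{Generalize}$ is that $\VS{E}_i$ is ${n_m}_{i-1}$-bounding \emph{for ${\phi_s}_{i-1}$}, i.e.\ a statement about counts under the blocked formula, whereas the lemma demands a bound on counts under the original $\phi$. A direct appeal to \cref{prop:phi_goes_down} is useless, since ${\phi_s}_{i-1} \models \phi$ makes that inequality point the wrong way; it is precisely \cref{prop:stable_psi} that transfers the bound in the correct direction, and only for the witnesses that have not yet been blocked.

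For the inductive step, fix $x' \in \bigcup_{j \le i}\Equiv{x_j}{\VS{E}_j}$. If $x'$ already lies in $\bigcup_{j \le i-1}\Equiv{x_j}{\VS{E}_j}$, the induction hypothesis yields $\Count{\phi}{x'}YZ \le {n_m}_{i-1} \le {n_m}_i$, using that $({n_m}_i)_i$ is increasing (as noted in the termination proof). Otherwise $x' \in \Equiv{x_i}{\VS{E}_i}$ and $x' \models \psi$, so the identity above gives $\Count{\phi}{x'}YZ = \Count{{\phi_s}_{i-1}}{x'}YZ$, and I split on the branch taken at iteration $i$. If iteration $i$ recorded a new maximum, then $\VS{E}_i = \VS{X}$ forces $\Equiv{x_i}{\VS{X}} = \{x_i\}$, hence $x' = x_i$, and $\Count{{\phi_s}_{i-1}}{x_i}YZ = c = {n_m}_i$. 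If iteration $i$ generalized, then \cref{prop:partial_bound_complete} together with the ${n_m}_{i-1}$-bounding property of $\VS{E}_i$ gives
\[
  \Count{{\phi_s}_{i-1}}{x'}YZ \le \Count[\VS{E}_i]{{\phi_s}_{i-1}}{x_i}YZ \le {n_m}_{i-1} = {n_m}_i .
\]
In every case $\Count{\phi}{x'}YZ \le {n_m}_i$, which closes the induction.
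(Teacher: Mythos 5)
Your proof is correct and follows essentially the same route as the paper's: transfer the $n$-bounding guarantee from ${\phi_s}$ back to $\phi$ via \cref{prop:stable_psi}, then invoke monotonicity of $({n_m}_i)_i$. Your inductive organization additionally makes explicit two points the paper's terse argument leaves implicit --- that \cref{prop:stable_psi} only applies to witnesses not already blocked at an earlier iteration (the others being covered by the induction hypothesis), and that the new-maximum branch with $\VS{E}_i = \VS{X}$ is trivially ${n_m}_i$-bounding --- which is a welcome tightening rather than a different proof.
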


\begin{proof}
  Let $j \leq i$, and let $x \in \Equiv{x_j}{\VS{E}_j}$, following \cref{prop:n_bound}, we have
  $\Count{{\phi_s}_j}xYZ \leq {n_m}_j$.

  Then by construction of ${\phi_s}_i$ and \cref{prop:stable_psi} we have $\Count{\phi}xYZ \leq {n_m}_j$
  which, as $\pars{{n_m}_i}_i$ is increasing, proves the lemma.
\end{proof}

\begin{theorem}[Correctness with an exact \MC{} oracle]
  \cref{alg:cegar} is correct, i.e., the returned tuple $\pars{x_m, n_m}$ satisfies the
  following relation:
  \[
    n_m = \Count{\phi}{x_m}YZ = \max_{x \in \Models{X}{\phi}} \Count{\phi}xYZ
  \]
\end{theorem}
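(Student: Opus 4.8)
The plan is to prove the two equalities separately, combining the termination result with \cref{lemma:loop_invariant_complete,lemma:bounded_union}. Write $k$ for the number of iterations performed before termination, so that the returned pair is $(x_m, {n_m}_k)$; by the exit test (with $\kappa = 0$ in the exact setting) the loop stops precisely once ${n_m}_k \ge N_k$, as already noted above.

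First I would show $n_m = \Count{\phi}{x_m}YZ$. The pair $(x_m, n_m)$ is modified only in the branch that records a new maximum (\cref{alg:cegar:newmax}), where $x_m$ is set to a candidate $x$ freshly drawn from $\Models{X}{\phi_s}$ and simultaneously $n_m \gets c = \Count{\phi_s}xYZ$; in every other iteration the pair is frozen. Such an $x$ is a witness of the current $\phi_s$, hence satisfies every blocking clause added so far; writing $\phi_s = \phi \land \psi$ with $\psi(\VS{X})$ the conjunction of those clauses, \cref{prop:stable_psi} gives $\Count{\phi_s}xYZ = \Count{\phi}xYZ$. Thus at the moment of each update $n_m = \Count{\phi}{x_m}YZ$, and since the pair does not change between updates this still holds at termination. (If the loop body never runs we must have $N_0 = 0$, so $\phi$ has no witness and both sides are $0$; otherwise the very first candidate has count $\ge 1 > {n_m}_0 = 0$ and triggers an update.) In particular $x_m \in \Models{X}{\phi}$, whence $n_m = \Count{\phi}{x_m}YZ \le \max_{x \in \Models{X}{\phi}} \Count{\phi}xYZ$.

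For the reverse inequality I would fix an arbitrary $x^* \in \Models{X}{\phi}$ and show $\Count{\phi}{x^*}YZ \le {n_m}_k$, splitting on whether $x^*$ has been blocked. By \cref{lemma:loop_invariant_complete} the witnesses of $\phi$ that survive all the blocking steps are exactly $\Models{X}{{\phi_s}_k}$, so every $x^*$ either survives or lies in some blocked class $\Equiv{x_j}{\VS{E}_j}$. If $x^*$ is blocked, \cref{lemma:bounded_union} gives directly $\Count{\phi}{x^*}YZ \le {n_m}_k$. If instead $x^* \in \Models{X}{{\phi_s}_k}$, then $x^*$ satisfies all blocking clauses, so \cref{prop:stable_psi} again yields $\Count{\phi}{x^*}YZ = \Count{{\phi_s}_k}{x^*}YZ$; since $\Induced{{\phi_s}_k}{x^*} \subseteq \Models{Y}{{\phi_s}_k}$ (equivalently, applying \cref{prop:partial_bound_complete} with the fully relaxed, empty witness, whose count is exactly $N_k$), this is at most $N_k \le {n_m}_k$ by the exit test. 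Taking $x^*$ to be a maximizer proves $\max_{x \in \Models{X}{\phi}} \Count{\phi}xYZ \le {n_m}_k = n_m$, and together with the previous paragraph this gives $n_m = \Count{\phi}{x_m}YZ = \max_{x \in \Models{X}{\phi}} \Count{\phi}xYZ$.

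The step I expect to be the crux is the \emph{surviving-witness} case: its whole point is that the cheap upper bound $N_k$, computed once per iteration from the \emph{empty} witness of $\phi_s$, dominates the count of every witness that the algorithm never explicitly examined. This relies on $N_k$ being a simultaneous upper bound for all surviving witnesses (via \cref{prop:partial_bound_complete}) and on \cref{prop:stable_psi} letting counts pass between $\phi_s$ and $\phi$; I would therefore be careful that every blocking clause depends on $\VS{X}$ alone, so that \cref{prop:stable_psi} is applicable, and that the loop-invariant partition of \cref{lemma:loop_invariant_complete} is genuinely exhaustive, leaving no witness uncounted.
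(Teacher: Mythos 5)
Your proof is correct and follows essentially the same route as the paper: split $\Models{X}{\phi}$ into blocked and surviving witnesses via \cref{lemma:loop_invariant_complete}, bound the blocked ones with \cref{lemma:bounded_union}, and bound the survivors by $N_k \leq {n_m}_k$ via \cref{prop:partial_bound_complete} and the exit test. You additionally spell out two points the paper leaves implicit --- the equality $n_m = \Count{\phi}{x_m}YZ$ at each update of the maximum, and the use of \cref{prop:stable_psi} to transfer counts between $\phi_s$ and $\phi$ --- which is a faithful completion rather than a different argument.
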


\begin{proof}
  Following \cref{prop:partial_bound_complete} and since ${n_m}_k \geq N_k$
  we have:
  \[
    \forall x \in \Models{X}{{\phi_s}_k} \St \Count{\phi}xYZ \leq N_k \leq {n_m}_k
  \]
  Then instantiating \cref{lemma:bounded_union} at iteration $k$ we have:
  \[
    \forall x \in \bigcup_{i \leq k} \Equiv{x_i}{\VS{E}_i}, \Size{\phi(x, Y, Z)} \leq {n_m}_k
  \]
  Following \cref{lemma:loop_invariant_complete}, at iteration $k$ we have
  $\Models{X}{\phi} = \Models{X}{{\phi_s}_k} \cup \bigcup_{i \leq k} \Equiv{x_i}{\VS{E}_i}$, and the result
  follows.
\end{proof}

\subsection{Correctness with a probabilistic \MC{} oracle}

Since the termination can be proven in the same way as in the exact case, we only prove the
correctness.

Let us first recall the expected guarantees provided by an approximate model
counter~\cite{chakraborty2013approxmc}, where the $\epsilon$ parameter characterizes the precision of
the result and the $\delta$ parameter determines its associated confidence.
\begin{property}[Correctness of the Model Counting]
  \label{prop:approx_PAC}
  The count $\ApproxCount{\phi}xYZ\epsilon\delta$ returned by an approximate model counter satisfies
  the following:

  \[
    \P{\frac1{1 + \epsilon} \leq \frac{\ApproxCount{\phi}xYZ\epsilon\delta}{\Size{\phi(x, \VS{Y}, \VS{Z})}} \leq 1 + \epsilon} \geq 1 - \delta
  \]
  These guarantees extend to partial witnesses naturally, i.e., queries of the form
  $\ApproxCount{\phi}{\RelaxVS{x}{E}}{\VS{Y}}{\VS{Z}}\epsilon\delta$.
\end{property}

The next theorem proves the correctness of \cref{alg:cegar} in the approximate case and gives the
associated \emph{tight} bounds.

\begin{theorem}
  \label{thm:cegar:approx}
  Let $(x_m, n_m)$ be the result  returned  by the call \BaxMC{}{$_{\epsilon_0, \epsilon_1,
  \delta_0, \delta_1, \delta_2,\kappa} (\phi(\VS{X}, \VS{Y}, \VS{Z}))$},
  and let
  \[
    M = \max_{x \in \Models{X}{\phi}}\Count{\phi}{x}{Y}{Z}
  \]
  If $\delta_1\leq \frac{\delta_2}{\Size{\VSX}+1}$ then:
  \[
    \P{\frac1{1 + \epsilon_1} \leq \frac{n_m}{\Count{\phi}{x_m}YZ} \leq 1 + \epsilon_1} \geq 1 - \delta_1
  \]
  and
  \begin{multline*}
    \P{\Count{\phi}{x_m}YZ \geq \frac{M}{ (1+\epsilon_0) * (1+\epsilon_1) * (1 +\kappa) } } \\
     \geq (1-\delta_1) * \min(1 - \delta_2, 1- \delta_0)
  \end{multline*}
\end{theorem}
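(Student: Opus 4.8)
The plan is to prove the two displayed inequalities separately, but to route both of them through a \emph{single} randomised counting call so that the confidence $1-\delta_1$ is not spent twice. Write $k$ for the terminating iteration and $\ell\le k$ for the last iteration at which the \emph{New maximum} branch was taken, so that $x_m=x_\ell$ and $n_m$ is exactly the value returned by $\ApproxCount{{\phi_s}_{\ell-1}}{x_\ell}YZ{\epsilon_1}{\delta_1}$ at \cref{alg:cegar:mccall_2}. Since $x_\ell$ is drawn from $\Models{X}{{\phi_s}_{\ell-1}}$, it satisfies the blocking part $\psi$ of ${\phi_s}_{\ell-1}=\phi\land\psi$, so \cref{prop:stable_psi} gives $\Count{{\phi_s}_{\ell-1}}{x_m}YZ=\Count{\phi}{x_m}YZ$. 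Hence the single event $E_1$ that this one call is $\epsilon_1$-accurate (\cref{prop:approx_PAC}) already yields the first inequality $\frac1{1+\epsilon_1}\le\frac{n_m}{\Count{\phi}{x_m}YZ}\le 1+\epsilon_1$ with $\P{E_1}\ge 1-\delta_1$. I keep from $E_1$ only the consequence $\Count{\phi}{x_m}YZ\ge\frac{n_m}{1+\epsilon_1}$, which is all the second inequality needs from it.

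It therefore suffices to establish the auxiliary claim $(C)$: $n_m\ge\frac{M}{(1+\epsilon_0)(1+\kappa)}$, with probability at least $\min(1-\delta_2,1-\delta_0)$ over the oracle randomness \emph{other} than the $\ell$-th call, and then to combine. To prove $(C)$ I fix a true maximiser $x^\star$ with $\Count{\phi}{x^\star}YZ=M$ and split on its fate at termination using \cref{lemma:loop_invariant_complete}: either $x^\star\in\Models{X}{{\phi_s}_k}$ (still alive), or $x^\star\in\bigcup_{i\le k}\Equiv{x_i}{\VS{E}_i}$ (blocked).

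In the alive case, $x^\star\models\psi$, so \cref{prop:stable_psi} gives $\Count{\phi}{x^\star}YZ=\Count{{\phi_s}_k}{x^\star}YZ$, and since one induced set sits inside the projection of all models, $\Count{{\phi_s}_k}{x^\star}YZ\le\Count{{\phi_s}_k}{\emptyset}YZ=:T_k$. The event $E_0$ that the terminating call $N_k=\ApproxCount{{\phi_s}_k}{\emptyset}YZ{\epsilon_0}{\delta_0}$ is accurate gives $T_k\le(1+\epsilon_0)N_k$, while the loop-exit test gives $N_k\le(1+\kappa)n_m$; chaining yields $M\le(1+\epsilon_0)(1+\kappa)n_m$, i.e. $(C)$, with probability $\ge 1-\delta_0$. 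In the blocked case, let $j$ be the first iteration with $x^\star\in\Equiv{x_j}{\VS{E}_j}$; just before it $x^\star$ was alive, so again $\Count{\phi}{x^\star}YZ=\Count{{\phi_s}_{j-1}}{x^\star}YZ$. When $\VS{E}_j$ comes from \textsc{Generalize}, the event $E_2$ that this call returns an ${n_m}_{j-1}$-bounding set (its guarantee \cref{thm:gen:approx}, valid precisely because $\delta_1\le\frac{\delta_2}{\Size{\VSX}+1}$ dominates the failure of its at most $\Size{\VSX}+1$ internal $\delta_1$-queries), together with \cref{prop:partial_bound_complete}, gives $\Count{{\phi_s}_{j-1}}{x^\star}YZ\le\Count[\VS{E}_j]{{\phi_s}_{j-1}}{x_j}YZ\le{n_m}_{j-1}\le n_m$, so $M\le n_m$ and $(C)$ holds with probability $\ge 1-\delta_2$.

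For the final combination, the two cases for $x^\star$ are mutually exclusive and exhaustive, and on each the needed auxiliary event is $E_0$ respectively $E_2$; conditioning on the run up to the relevant call so that its fresh internal randomness is independent of the case indicator, this gives $\P{(C)}\ge\min(1-\delta_0,1-\delta_2)$, and since $E_1$ uses a distinct call it multiplies in to give the stated $(1-\delta_1)\min(1-\delta_2,1-\delta_0)$, after which $(C)$ and $\Count{\phi}{x_m}YZ\ge\frac{n_m}{1+\epsilon_1}$ compose to the second inequality. I expect the main obstacle to be twofold. First, this probabilistic bookkeeping: isolating the handful of oracle calls whose accuracy is actually invoked, justifying their independence to turn confidences into a product, and justifying the mutual exclusivity that turns two cases into a \emph{minimum} rather than a sum. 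Second, the sub-case where $x^\star$ is blocked through the \emph{New maximum} branch (so $\VS{E}_j=\VSX$ and $x^\star=x_j$) and is later replaced, since $\VSX$ is not guaranteed ${n_m}_j$-bounding; I would close it by observing that either $x^\star=x_m$, whence $\Count{\phi}{x_m}YZ=M$ outright, or the replacing candidate already carries a strictly larger approximate count, folding the accuracy of $x^\star$'s own selection query into the same budget via the hypothesis $\delta_1\le\frac{\delta_2}{\Size{\VSX}+1}$.
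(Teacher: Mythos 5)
Your proposal is correct and follows essentially the same route as the paper's own proof: the first inequality is the PAC guarantee on the single counting call that produced $n_m$ (via \cref{prop:stable_psi}), and the second comes from splitting the true maximiser between the alive case (accuracy of the final $N$ plus the loop-exit test, confidence $1-\delta_0$) and the blocked case (the \textsc{Generalize} guarantee of \cref{thm:gen:approx}, confidence $1-\delta_2$), taking the minimum and multiplying by the $1-\delta_1$ of the $x_m$ call. The obstacle you flag at the end --- witnesses blocked through the \emph{New maximum} branch, for which \textsc{Generalize} is never called and only the weaker bound $\Count{\phi}{x_j}YZ \leq (1+\epsilon_1)\, n_m$ is available --- is a genuine subtlety that the paper's proof silently absorbs into its claim for all of $\Models{X}{\phi \wedge \neg \phi_S}$, so your extra care is warranted; note however that your proposed repair yields a guarantee of the form $M/(1+\epsilon_1)^2$ for that sub-case, which only implies the stated $M/((1+\epsilon_0)(1+\kappa)(1+\epsilon_1))$ when $1+\epsilon_1 \leq (1+\epsilon_0)(1+\kappa)$, as happens under the instantiation of \cref{cor:main:maxcount}.
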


\begin{proof}
  Let $\phi_S$ be the final value of the variable $\phi_s$ after the last iteration of the \textbf{while} loop.
  We have the following guarantees from the approximate model counter (\cref{prop:approx_PAC}):
  \begin{equation}\label{meq3}
    \P{\frac1{1 + \epsilon_1} \leq \frac{n_m}{\Count{\phi}{x_m}YZ} \leq 1 + \epsilon_1} \geq 1 - \delta_1
  \end{equation}
  \begin{equation}\label{meq2}
    \P{\frac1{1 + \epsilon_0} \leq \frac{N}{\Size{\Models{Y}{\phi_S}}} \leq 1 + \epsilon_0} \geq 1 - \delta_0
  \end{equation}

  From \cref{thm:gen:approx} regarding the $\textsc{Generalize}$ function (which will be proved in
  the next section), we also have that for any $x\in  \Models{X}{\phi \wedge \neg \phi_S}$ it holds
  (assuming that $\delta_1\leq \frac{\delta_2}{\Size{\VSX}+1}$):
  \begin{equation}\label{meq4}
    \P{\Count{\phi}{x}YZ \leq n_m} \geq 1 - \delta_2.
  \end{equation}

  After the last iteration of the \textbf{while} loop we have that  $n_m * (1 + \kappa) \geq N$.
  Using this and  \cref{meq2} and \cref{prop:stable_psi} we get that  for any $x\in  \Models{X}{\phi_S}$ it holds
  \begin{eqnarray*}
    \P{\Count{\phi}{x}YZ \leq n_m * (1 + \kappa) * (1+\epsilon_0)}  & \geq & \\
    \P{\Count{\phi}{x}YZ \leq N * (1+\epsilon_0)}  & = & \\
    \P{\Count{\phi_S}{x}YZ \leq N * (1+\epsilon_0)}  & \geq & \\
    \P{ \Size{\Models{Y}{\phi_S}} \leq N * (1+\epsilon_0)} &  \geq & 1 - \delta_0.
  \end{eqnarray*}

  From \cref{meq4}, for any $x\in  \Models{X}{\phi \wedge \neg \phi_S}$ it holds
  \begin{eqnarray*}
    \P{\Count{\phi}{x}YZ \leq n_m  * (1 + \kappa) * (1+\epsilon_0)}  & \geq & \\
    \P{\Count{\phi}{x}YZ \leq n_m}  & \geq & 1 - \delta_2.
  \end{eqnarray*}

  Hence, for any $x\in \Models{X}{\phi}$ it holds
  \begin{multline*}
    \P{\Count{\phi}{x}YZ \leq n_m  * (1 + \kappa) * (1+\epsilon_0)} \doublecolnewline
    \geq \min(1 - \delta_2, 1- \delta_0)
  \end{multline*}

  and hence
  \begin{multline}\label{meq5}
    \P{\frac{n_m}{1+\epsilon_1} \geq \frac{M}{ (1 + \kappa) * (1+\epsilon_0)* (1+\epsilon_1)}} \doublecolnewline
    \geq \min(1 - \delta_2, 1- \delta_0).
  \end{multline}

  Combining this with the \cref{meq3}, we obtain
  \setlength{\belowdisplayskip}{-1em}
  \begin{eqnarray*}
    \P{\Count{\phi}{x_m}YZ \geq \frac{M}{(1 + \kappa) * (1+\epsilon_0) * (1+\epsilon_1) } }  & \geq \\
    (1-\delta_1) * \min(1 - \delta_2, 1- \delta_0).&
  \end{eqnarray*}
\end{proof}

The following corollary instantiates \cref{thm:cegar:approx} in order to get the standard form (as in
\cref{prop:approx_PAC}).

\begin{corollary}
  \label{cor:main:maxcount}
  For any $0< \epsilon, \delta<1$, if in the call of the \BaxMC{} function, we take as parameters
  $\epsilon_0 = \epsilon_1 =\kappa = \sqrt[3]{1 + \epsilon} - 1$, $\delta_0 = \delta_2 =
  \frac{\delta}{2}$ and $\delta_1 = \frac{\delta}{2 * (\Size{\VSX}+1)}$, then the result   $(x_m, n_m)$ satisfies the following inequalities:
  \begin{eqnarray*}
    \P{\frac1{1 + \epsilon} \leq \frac{n_m}{\Count{\phi}{x_m}YZ} \leq 1 + \epsilon} & \geq 1 - \delta \\
    \P{\Count{\phi}{x_m}YZ \geq \frac{M}{1+\epsilon } }  & \geq  1-\delta &
  \end{eqnarray*}
  where
  \[
    M = \max_{x \in \Models{X}{\phi}}\Count{\phi}{x}{Y}{Z}
  \]
\end{corollary}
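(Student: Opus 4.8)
The plan is to substitute the chosen parameter values directly into \cref{thm:cegar:approx} and check that each of the two bounds collapses to the claimed standard form. Before applying the theorem, I would first verify that its hypothesis $\delta_1 \leq \frac{\delta_2}{\Size{\VSX}+1}$ holds: with $\delta_1 = \frac{\delta}{2 (\Size{\VSX}+1)}$ and $\delta_2 = \frac{\delta}{2}$ we have $\frac{\delta_2}{\Size{\VSX}+1} = \frac{\delta}{2(\Size{\VSX}+1)} = \delta_1$, so the hypothesis is met (with equality) and the theorem applies.

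For the first inequality, the key observation is that $1 + \epsilon_1 = \sqrt[3]{1+\epsilon} \leq 1+\epsilon$, since $t^{1/3} \leq t$ whenever $t \geq 1$. Consequently the interval $\left[\frac{1}{1+\epsilon_1}, 1+\epsilon_1\right]$ is contained in $\left[\frac{1}{1+\epsilon}, 1+\epsilon\right]$, so the event controlled by \cref{thm:cegar:approx} is contained in the event claimed here. Monotonicity of probability then gives a lower bound of $1 - \delta_1$, and since $\delta_1 = \frac{\delta}{2(\Size{\VSX}+1)} \leq \delta$, this is at least $1 - \delta$.

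For the second inequality, the crucial simplification is that the three tolerance factors multiply back to $1+\epsilon$, namely $(1+\epsilon_0)(1+\epsilon_1)(1+\kappa) = \left(\sqrt[3]{1+\epsilon}\right)^3 = 1+\epsilon$, which is exactly the reason the cube root was chosen. This turns the theorem's lower bound $\frac{M}{(1+\epsilon_0)(1+\epsilon_1)(1+\kappa)}$ into $\frac{M}{1+\epsilon}$. For the confidence, substituting $\delta_0 = \delta_2 = \frac{\delta}{2}$ gives $\min(1-\delta_2, 1-\delta_0) = 1 - \frac{\delta}{2}$, and I would bound the product $(1-\delta_1)\left(1-\frac{\delta}{2}\right)$ from below using the elementary inequality $(1-a)(1-b) \geq 1 - a - b$ together with $\delta_1 + \frac{\delta}{2} = \frac{\delta}{2(\Size{\VSX}+1)} + \frac{\delta}{2} \leq \delta$.

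None of these steps is a genuine obstacle; the corollary is essentially a parameter-tuning computation that unpacks \cref{thm:cegar:approx}. The only point requiring minor care is the probability product in the second bound, where one must linearize via $(1-a)(1-b) \geq 1-a-b$ rather than claim the stronger statement directly. It is worth noting that the factor $\frac{1}{\Size{\VSX}+1}$ appearing in $\delta_1$ is dictated by the hypothesis of \cref{thm:cegar:approx} (which the analysis of \textsc{Generalize} requires), rather than by the corollary's conclusion: for the two displayed bounds alone even the coarser choice $\delta_1 = \frac{\delta}{2}$ would suffice.
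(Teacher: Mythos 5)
Your proposal is correct and follows essentially the same route as the paper: verify the hypothesis $\delta_1 \leq \frac{\delta_2}{\Size{\VSX}+1}$, then substitute the parameters into \cref{thm:cegar:approx}, using $\epsilon_1 \leq \epsilon$, $(1+\epsilon_0)(1+\epsilon_1)(1+\kappa) = 1+\epsilon$, and $(1-\delta_1)(1-\delta_0) \geq 1 - \delta_1 - \delta_0 \geq 1-\delta$. The paper's proof is just a terser version of the same parameter check, and your closing remark about the role of $\Size{\VSX}+1$ in $\delta_1$ is accurate.
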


\begin{proof}
  It is easy to check that $\epsilon_1 = \sqrt[3]{1 + \epsilon} - 1 \leq \epsilon$, $\delta_1 =
  \frac{\delta}{2 * (\Size{\VSX}+1)} <  \delta_2 = \frac{\delta}{2} < \delta$, $(1+\epsilon_0)^3 = 1 + \epsilon$ and $(1-\delta_1) * (1 - \delta_0) = 1 - \delta_0 - \delta_1 + \delta_0 * \delta_1 > 1 - 2 * \delta_0 = 1 - \delta.$
\end{proof}

\todo[inline]{Ajouter ici la version qui dépends de $\kappa$}

\section{Generalization algorithm}
\label{sec:generalization}
\cref{alg:elimination} generalizes a single model $x$ with insufficiently high count to a set of
models with insufficiently high count. This is much the same that a CDCL loop blocks not
only one assignment, but a whole set of assignments.

As shown in \cref{prop:monotonous_count}, generalizing a witness is an instance of the MSMP problem (\textit{Minimal Set subject to a Monotone
  Predicate}), which can be solved using generic algorithms such as \textsc{QuickXPlain}
\cite{junker2001quickxplain}.
Although in theory this should lead to a better algorithm, in practice we observed larger numbers of
calls to the \MC{} oracle, an issue already identified in other contexts~\cite{DBLP:conf/cav/Monniaux10}.

\cref{alg:elimination} is thus a specific solver of the MSMP problem in our setting, relying on a
\emph{linear sweep} over the variables that are part of the valuation.

\begin{algorithm}[tb]
  \begin{algorithmic}[1]
    \Function{Generalize$_{\delta}$}{$x$, $\phi(\VS{X},\VS{Y},\VS{Z})$, $n_m$}

    \State $\VS{E} \gets \VS{X}$
    \ForAll{$X_i \in \VSX$}
    \Comment{Redundancy elimination}
    \If{$\phi\pars{\Repl{x}{X_i}{\lnot x(X_i)}, \VS{Y}, \VS{Z}}$ UNSAT}
    \State $\VS{E} \gets \VS{E} - \left\{X_i\right\}$
    \label{alg:elimination:updateE1}
    \EndIf
    \EndFor

    \State $k \gets \log{n_m} - \log{\ApproxCount[\VSE]{\phi}xYZ\epsilon{\delta_1}}$
    \While{$k > 0 \land \Size{\VS{E}} > 0$} \label{alg:elimination:while}
    \Comment{Log-elimination}
    \State $\VS{A}_k \xleftarrow{\$} \Set{\VS{V} \subseteq \VS{E}}{\Size{\VS{V}} = k}$
    \State $c \gets \ApproxCount{\phi}{x\lvert_{\VS{E} - \VS{A}_k}}YZ\epsilon{\delta_1}$
    \If {$c \leq \frac{n_m}{1 + \epsilon}$}
    \label{alg:elimination:checkbound1}
    \State $\VS{E} \gets \VS{E} - \VS{A}_k$
    \label{alg:elimination:updateE2}
    \State $k \gets \log{n_m} - \log{c}$
    \Else
    \State $k \gets k - 1$
    \label{alg:elimination:bad_log}
    \EndIf
    \EndWhile

    \ForAll {$X_i \in \VSX - \VSE$}
    \Comment{Refinement}
    \If {$\ApproxCount{\phi}{\Relax{x}{\VS{E} - \left\{X_i\right\}}}YZ\epsilon{\delta_1} \leq \frac{n_m}{1 + \epsilon}$}
    \label{alg:elimination:checkbound2}
    \State $\VS{E} \gets \VS{E} - \left\{X_i\right\}$
    \label{alg:elimination:updateE3}
    \EndIf
    \label{alg:elimination:bad_bit}
    \EndFor
    \EndFunction
  \end{algorithmic}
  \caption{Pseudocode for the generalization algorithm}
  \label{alg:elimination}
\end{algorithm}

For efficiency reasons, the steps mentioned in \cref{alg:elimination} are in a
precise order. The reason behind this is:

\begin{enumerate}
  \item The first step relies on a consequence of \cref{prop:relaxing_bounds_each}, allowing to
        relax variables with simple calls to a sat solver.
  \item The log-based generalization is a heuristic allowing to do \textit{big steps} in the
        generalization process by relaxing multiple variables at each loop turn.
  \item The \textit{linear sweep} pass generalizes $x$ in such a way that the returned set is
        minimal, i.e. that none of the further generalizations of the returned value satisfies
        \cref{prop:n_bound}.
\end{enumerate}

The returned $\VSE$ is guaranteed only to be a \emph{local minimum} and it may not be the
\textit{smallest} set such that \cref{prop:n_bound} holds because of the order in which we consider
variables of $\VS{X}$ in \cref{alg:elimination}.

\subsection{Correctness and complexity with an exact \MC{} oracle}

\label{sec:gen:proof}

\begin{property}
  \label{prop:unsat_same}
  If $\phi(\Repl{\RelaxVS{x}{E}}{X_i}{\lnot x(X_i)}, \VS{Y}, \VS{Z})$ is UNSAT then:
  \[
    \CountVS[E]{\phi}xYZ = \Count[\VSE - \left\{ X_i \right\}]{\phi}xYZ
  \]
\end{property}

\begin{proof}
  This follows directly from \cref{prop:relaxing_bounds_each}.
\end{proof}

Let us prove the correctness of \cref{alg:elimination} in the context of an exact \MC{}
oracle. This will finish the correctness proof started in \cref{sec:main:proof}.

\begin{theorem}
  \label{thm:gen:correct}
  \cref{alg:elimination} terminates and is correct:
  the returned set $\VS{E}$ satisfies
  \cref{prop:n_bound}, i.e.,
  $\CountVS[E]{\phi}xYZ \leq n$.
\end{theorem}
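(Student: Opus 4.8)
The plan is to establish both claims through a single loop invariant, handling the three phases of \cref{alg:elimination} in turn and working throughout in the exact setting, where every call $\ApproxCount{\phi}{x}YZ00$ returns the true value $\Count{\phi}xYZ$ and the guard $c \le \frac{n}{1+\epsilon}$ becomes simply $c \le n$. The invariant I would carry is
\[
  \CountVS[E]{\phi}xYZ \le n ,
\]
asserting that it holds after every assignment to $\VS{E}$. It holds at initialization, where $\VS{E} = \VS{X}$ gives $\CountVS[X]{\phi}xYZ = \Count{\phi}xYZ \le n$: this is exactly the precondition under which \cref{alg:cegar} calls $\textsc{Generalize}$ (the \textbf{else} branch, where the count of the complete witness $x$ does not exceed the current best $n$). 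Since every later update to $\VS{E}$ only removes variables, it suffices to check that each removal preserves the invariant.

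I would then dispatch the three phases. In the \emph{redundancy-elimination} loop a variable $X_i$ is dropped only when $\phi(\Repl{\RelaxVS{x}{E}}{X_i}{\lnot x(X_i)}, \VS{Y}, \VS{Z})$ is UNSAT, and \cref{prop:unsat_same} then gives $\Count[\VSE - \Braces{X_i}]{\phi}xYZ = \CountVS[E]{\phi}xYZ$, so the count — and hence the invariant — is unchanged. In the \emph{log-elimination} loop the block $\VS{A}_k$ is removed only inside the branch guarded by $c \le n$, where $c = \Count[\VSE - \VS{A}_k]{\phi}xYZ$ is precisely the count after the removal; and in the \emph{refinement} loop a single variable is removed only when the corresponding count is likewise $\le n$. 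In every case $\VS{E}$ shrinks only when the post-removal count is $\le n$, so the invariant is preserved and the returned set satisfies \cref{prop:n_bound}. This part of the argument is routine.

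The real obstacle is termination of the log-elimination \textbf{while} loop, whose natural variant $k$ is not monotone: it decreases by $1$ on the \textbf{else} branch but is reset to $\log n - \log c$ on the \textbf{if} branch and may thereby grow. I would order the iterations lexicographically by the pair $(\Size{\VS{E}}, k)$. On each \textbf{if} branch the nonempty block $\VS{A}_k$ is removed, so $\Size{\VS{E}}$ strictly decreases; this can happen at most $\Size{\VS{X}}$ times. Between two consecutive \textbf{if} branches every iteration is an \textbf{else} branch and decreases $k$ by $1$, and since the loop exits as soon as $k \le 0$ while each reset leaves $k = \log n - \log c \le \log n$ (using $c \ge 1$, valid because $x$ is a genuine model and the counts are positive), only finitely many \textbf{else} iterations can occur between resets. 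The two surrounding \textbf{for} loops are finite sweeps over subsets of $\VS{X}$, so the whole procedure terminates. I expect the only fiddly points to be confirming that $k$ stays bounded across resets and reading $k$ as the integer $\min(\lfloor k \rfloor, \Size{\VS{E}})$ when the block $\VS{A}_k$ is sampled, both of which are immediate once $c \ge 1$ is noted.
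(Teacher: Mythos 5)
Your proof is correct and follows essentially the same route as the paper's: the same invariant ($\VS{E}$ remains $n$-bounding, preserved at each of the three update sites via \cref{prop:unsat_same} for redundancy elimination and via the explicit guards for log-elimination and refinement), and the same termination idea (each iteration of the \textbf{while} loop decreases either $\Size{\VS{E}}$ or $k$). You are in fact somewhat more careful than the paper on termination --- making the lexicographic order on $(\Size{\VS{E}}, k)$ and the bound $k \le \log n$ after resets explicit, and stating the precondition $\Count{\phi}xYZ \le n$ needed for the invariant to hold initially --- but these are refinements of the same argument rather than a different one.
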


\begin{proof}
  In the \textbf{while} loop at \cref{alg:elimination:while} we can see that, at each iteration,
  either $\Size{\VS{E}}$ or $k$ decreases, thus ensuring the termination of the algorithm.

  During any update of the temporary value $\VS{E}$
  (\cref{alg:elimination:updateE1,alg:elimination:updateE2,alg:elimination:updateE3}), we ensure
  that the new value of $\VS{E}$ satisfies \cref{prop:n_bound}:
  \begin{enumerate}
    \item At \cref{alg:elimination:updateE1}, \cref{prop:unsat_same} keeps the model counting stable.
    \item At \cref{alg:elimination:updateE2,alg:elimination:updateE3}, the update is guarded by the
          explicit check of the property (in the \textbf{if} statement
          \cref{alg:elimination:checkbound1,alg:elimination:checkbound2}).
  \end{enumerate}

  Hence the correctness follows.
\end{proof}

\subsection{Bounds with an approximate \MC{} oracle}

\begin{theorem}
  \label{thm:gen:approx}
  Let $\VS{E}\subseteq \VS{X}$ be the set  returned  by the call
  \textsc{Generalize}{$_{\delta}(x, \phi(\VS{X},\VS{Y},\VS{Z}), n)$},
  and assume that
  \[
    \P{\Count{\phi}xYZ \leq n} \geq 1 - \frac{\delta}{\Size{\VS{X}}+1}
  \]

  Then:
  \[
    \P{\Count[\VS{E}]{\phi}{x}YZ \leq n} \geq 1 - \delta
  \]
\end{theorem}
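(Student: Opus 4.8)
The plan is to reduce the probabilistic statement to the exact invariant-preservation argument of \cref{thm:gen:correct} by a union bound over the finitely many random events on which the returned set depends: the hypothesis event $\Count{\phi}{x}YZ \leq n$, and the outcomes of the individual approximate model-counter queries made inside \textsc{Generalize}.

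First I would fix, for each call to the approximate counter, its ``good event'', namely that the returned value $c$ obeys the two-sided bound of \cref{prop:approx_PAC}, which holds with probability at least $1 - \delta_1$. On this good event, whenever a guard $c \leq \frac{n}{1+\epsilon}$ of \cref{alg:elimination:checkbound1,alg:elimination:checkbound2} succeeds, the true count of the associated partial witness is at most $c\,(1+\epsilon) \leq n$; thus every \emph{accepted} reduction of $\VS{E}$ genuinely re-establishes the invariant $\Count[\VS{E}]{\phi}{x}YZ \leq n$ for the new, smaller $\VS{E}$. The reductions of the redundancy-elimination loop (\cref{alg:elimination:updateE1}) rest on a SAT call and leave the count unchanged by \cref{prop:unsat_same}, so they preserve the invariant as long as it held before them, i.e. as long as the hypothesis event holds (that event is precisely the case $\VS{E} = \VS{X}$). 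Hence, conditioned on the hypothesis event together with the good events of all accepting queries, the argument of \cref{thm:gen:correct} applies along the path actually executed and yields $\Count[\VS{E}]{\phi}{x}YZ \leq n$ for the returned set.

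Next I would control the number of events that must hold. A rejected guard does not modify $\VS{E}$ and so cannot influence the returned value; only accepting queries are relevant. Since $\VS{E}$ is initialized to $\VS{X}$ and each accepting query removes at least one variable of $\VS{X}$ from it, the total number of accepting queries across the log-elimination and refinement phases is at most $\Size{\VSX}$. Together with the single hypothesis event, this leaves at most $\Size{\VSX}+1$ events to control, so a union bound gives
\[
  \P{\Count[\VS{E}]{\phi}{x}YZ \leq n} \geq 1 - \frac{\delta}{\Size{\VSX}+1} - \Size{\VSX}\,\delta_1 \geq 1 - \delta,
\]
where the first subtracted term is the hypothesis failure probability, the second bounds the accepting queries using the standing parameter condition $\delta_1 \leq \frac{\delta}{\Size{\VSX}+1}$, and the final inequality is arithmetic.

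The main obstacle I anticipate is exactly this event count: a naive union bound over \emph{all} counter calls fails, because the \textbf{while} loop of \cref{alg:elimination:while} may issue many queries (each rejected guard merely decrements $k$), and their number is not bounded by $\Size{\VSX}$. The crux of the proof is therefore the observation that rejected queries are irrelevant to the correctness of the final $\VS{E}$, so that only the at most $\Size{\VSX}$ accepting queries — each consuming at least one witness variable — enter the failure budget, matching the $\Size{\VSX}+1$ slices of $\delta$ furnished by the hypothesis and by the per-query confidence $\delta_1$.
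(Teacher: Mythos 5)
Your proof is correct and follows essentially the same route as the paper: both arguments rest on the observation that only the \emph{accepted} reductions of $\VS{E}$ matter, that there are at most $\Size{\VSX}$ of them since each consumes at least one witness variable, and that together with the hypothesis event this yields $\Size{\VSX}+1$ failure slices of size $\frac{\delta}{\Size{\VSX}+1}$. The paper packages the final step as a product $(1-\delta_1)^{l+1} \geq 1-(l+1)\delta_1$ over the $l \leq \Size{\VSX}$ updates rather than as an explicit union bound, but this is the same estimate.
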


\begin{proof}
  Using \cref{prop:unsat_same}, the variable $\VSE$ after the first loop within
  \cref{alg:elimination} satisfies $\Count{\phi}xYZ =\Count[\VS{E}]{\phi}xYZ$.

  We denote by  $C_{\Relax{x}{\VSV}}$ the value returned by the call
  $\ApproxCount[\VS{V}]{\phi}{x}YZ\epsilon{\delta_1}$.  Since each time we update $\VS{E}$ to a set
  $\VSV$   we ensure $C_{\RelaxVS{x}{V}} \leq \frac{n}{1 + \epsilon}$, we have the following probability:
  \[
    \P{\Count[\VSE]{\phi}xYZ \leq n} \geq 1 - \delta_1
  \]

  Let $\VSE_l$ denote the value obtained after $l$ updates of variable $\VSE$ during
  \textsc{Log-elimination} and \textsc{Refinement} steps within \cref{alg:elimination} and let us
  denote by $P_l$ the probability  that the set $\VSE_l$  is approximately $n$-bounding.

  Using that we update $\VSE$ to the value $\VSE_l$ only  if $C_{\Relax{x}{\VSE_{l}}}*(1+\epsilon)\leq n$,
  we have the following recursive relation:
  \begin{eqnarray*}
    P_l &=& \P{\Count[\VSE_l]{\phi}xYZ \leq n} *  P_{l-1}\\
    & &  \geq (1 - \delta_1)  * P_{l-1} \geq \pars{1 - \delta_1}^l * P_0\\
    & &   \geq \pars{1 - \delta_1}^{l}* \pars{1 - \frac{\delta}{\Size{\VSX} + 1}}
  \end{eqnarray*}

  Thus, as $l \leq  \Size{\VSX}$, if we take $\delta_1 = \frac{\delta}{\Size{\VSX}+1}$ and we call  the
  \MC{} oracle with parameters $(\epsilon, \frac{\delta}{\Size{\VSX}+1})$  we get:
  \[
    \P{\Count[\VSE]{\phi}xYZ \leq n}  \geq \pars{1 - \delta_1}^{l+1} \geq 1 - (l+1) * \delta_1 \geq 1 - \delta
  \]
\end{proof}

\begin{remark}
  The bound with respect to the number of updates is tight. The worst case is reached when the only
  valid subset of $\VSX$ is $\VSX$ itself, that is when the model cannot be generalized.
\end{remark}

\section{Breaking symmetries in \MaxMC{}}
\label{sec:symmetries}
Symmetries are a special kind of permutations of the input variables of a formula leaving it intact.
Exploiting or breaking symmetries in SAT formulas has long been a topic of interest.

For instance, if a formula is left intact by such a permutation then for each blocking
clause $C$, the solver may need to generate the full orbit of $C$ by the group of permutations,
leading to combinatorial explosion. Breaking the symmetry means selecting one solution per orbit by
adding a predicate called \emph{symmetry breaking predicate} to the formula, purposefully generated
to break the symmetries.
The resulting formula is equisatisfiable, but often simpler to solve.

\subsection{Correctness in the presence of symmetries}

In our context, handling symmetries within the witness set reduces the size of the search space, and
leads to better complexity. We give in this section arguments about why this is true.

\begin{definition}
  \label{def:symmetry}
  Given a Boolean formula $\phi(\VSX, \VSY, \VSZ)$, a \textit{symmetry} of $\phi$ is a bijective
  function $\sigma: \overline{\VSX} \mapsto \overline{\VSX}$ that preserves negation,
  that is $\sigma(\lnot X) = \lnot \sigma(X)$, and such that, when $\sigma$ is lifted to formulas,
  $\sigma(\phi) = \phi$ syntactically \cite{zhang2021symmetries}.

  $S_\phi$ denotes the set of all symmetries of $\phi$. We lift $S_\phi$ to models
  by defining the set of symmetries of a model $x$, $S_\phi(x) = \Set{x \circ \sigma}{\sigma \in S_\phi}$.
\end{definition}

\begin{theorem}
  \label{thm:symmetry:correctness}
  In \cref{alg:cegar}, picking only one $x$ per symmetry class of $\phi$ preserves the
  correctness of the algorithm both in the exact and approximate case.
\end{theorem}

\begin{proof}
  \label{rem:symmetry:relaxed_symmetry}
  Whatever the method used to select only one member of each symmetry class, this corresponds to
  creating a symmetry breaking predicate $\psi(\VSX)$ and solving the problem over $\phi \land \psi$,
  and thus the \cref{prop:stable_psi} applies.
\end{proof}

\subsection{Implementing \MaxMC{} symmetry breaking}
We detect symmetries in $\phi$ using the automorphisms of a colored graph representing the formula, defined as follows:
\begin{itemize}
  \item For each variable, create two nodes: one for the positive literal, and one for the negative
    literal. Use color $0$ if the variable is in $X$, otherwise use color $1$. Add an edge (\emph{Boolean consistency edge}) between the
    two nodes.
  \item For each clause, create a node, and assign to it the color~$2$. Add an edge between this clause node
    and every node corresponding to a literal present in the clause.
\end{itemize}

Many tools can be used in order to list the automorphisms of a graph.
In our case, we used
\textsc{bliss} \cite{junttila2007bliss} because of its C++ interface, and its performance.

After detecting the symmetries, one can use any symmetry breaking technique available, either static
\cite{devriendt2016breakid} or dynamic \cite{metin2018cdclsym}. In our implementation, we chose to
use \textsc{CDCLSym} \cite{metin2018cdclsym} because of its ease of use, and because it avoids
generating complex symmetry breaking predicates ahead of time.

\section{Heuristics and optimizations}
\label{sec:heuristics}
We present in this section heuristics used in both \cref{alg:cegar,alg:elimination} in practice, and
discuss their effectiveness.

\subsection{Progressive construction of the candidate}

\label{subsec:progressive_forward}

A simple yet effective optimization is to gradually add literals to the candidate $x$ in
\cref{alg:cegar} at \cref{alg:cegar:newsol}. By stopping earlier, this allows to call
\textsc{Generalize} on a partial assignment instead of a complete one, and will decrease the number of
calls to the \MC{} oracle as it anticipates work that is done in \cref{alg:elimination}.

\subsection{Leads}

\label{subsec:leads}

When performing the generalization in \cref{alg:elimination}, one can see that we can extract
\textit{hints} about promising parts of the search space when relaxing variables.
Indeed, when relaxing parts of the solution
(\cref{alg:elimination:bad_log,alg:elimination:bad_bit}), if the model count of the relaxation goes above
$n_m$, then this part of the search space may contain an improvement over the current solution.

Following this intuition, one can hold a \textit{sorted list}\footnote{The order to use here is:
  first the count of the relaxation, then the size of the relaxation.} of relaxations whose count is above the
current best known maximum, and use it to favor parts of the search space that look promising. We
call these promising relaxations \textit{leads}.
More formally, given $\RelaxVS{\tilde{x}}{E}$ a lead, when searching for a new solution in \cref{alg:cegar} at \cref{alg:cegar:newsol},
instead of searching in $\Models{X}{\phi_s}$, one would search in $\Models{X}{\phi_s} \cap
  \EquivVS{\tilde{x}}{E}$.

Let $L_n\pars{\phi}$ denote the set of leads currently known to the solver with count lower than
$n$. When the currently known maximum is improved in \cref{alg:cegar} at
\cref{alg:cegar:newmax}, we can block all leads whose count is below the new maximum:
\[
  {\phi_s}_{i+1} = {\phi_s}_i \land \bigwedge_{\EquivVS{\tilde{x}}{E} \in L_{n_m}\pars{\phi}} \lnot
  \pars{\RelaxVS{\tilde{x}}{E}}
\]

\subsection{Decision heuristic}

\label{subsec:decisionheurs}

As discussed in \cref{sec:gen:proof}, the performances of the algorithm depend on the order
with which variables are considered in various parts of the solving process (in the
generalization and during the optimization presented in \cref{subsec:progressive_forward}).
One can see that this kind of problem, that we call \textit{variable scheduling},
is actually predominant when solving SAT problems, and even \MC{} problems.

One first heuristic arises from the leads described in \cref{subsec:leads}. One can
use the leads list as indications for literals leading to promising parts of the search space, by
finding the literal which appears the most in the leads. We call this heuristic \texttt{leads}.

Another decision heuristic can be devised using VSIDS~\cite{moskewicz2001chaff}. The idea is to assign a
weight to each literal based on its last appearance in a blocking clause. The weight
of each literal is increased by a constant amount every time the literal appears in a blocking
clause, and is multiplicatively decreased at each blocking clause. This heuristic showed
promising results in both \SAT{} and \counting{\SAT}~\cite{sang2005heuristics}. We call this
heuristic \texttt{vsids}.

One could also choose the next decision variable at random, which we call \texttt{rnd}. And finally,
one could just pick the decision variables in the order they are provided to the tool, which we call
\texttt{none}.

An experimental evaluation is done in \cref{subsec:expe:heur}.

\subsection{Handling equivalent literals}

Equivalent literals are a notorious property of Boolean formulas which, when exploited, results
generally in better runtime performances \cite{lai2021mcequivalent}.

\begin{definition}
  Given a Boolean formula $\phi$, we say that two literals $L_i \in \overline{\VS{V}}$ and $L_j \in
  \overline{\VS{V}}$ are \emph{equivalent} if $\phi \models L_i \Leftrightarrow L_j$.
\end{definition}

Equivalent literals allow to simplify formulas based on the following theorem.
\begin{theorem}
  \label{thm:equiv:simplify}
  Let $\phi$ be a Boolean formula and two equivalent literals $L_i$ and
  $L_j$. Then solving the \MaxMC{} problem for $\phi$ is reduced to solving the \MaxMC{} problem for the simpler
  formula $\phi'$ obtained by replacing all occurrences of $L_j$ (resp. $\lnot L_j$) by $L_i$ (resp.
  $\lnot L_i$) when:

  \begin{enumerate}
    \item either $L_i$ and $L_j$ are in the same literal class (either $\overline{\VSX}$, $\overline{\VSY}$ or
      $\overline{\VSZ}$)
    \item or $L_i \in \overline{\VSX}$ and $L_j \in \overline{\VSY} \cup \overline{\VSZ}$
    \item or $L_i \in \overline{\VSY}$ and $L_j \in \overline{\VSZ}$.
  \end{enumerate}
\end{theorem}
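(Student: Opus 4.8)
The plan is to isolate a single model-correspondence lemma and then specialise it to the three configurations. First I would reduce to positive literals: replacing a variable by its negation is a bijection on valuations that leaves every count invariant and preserves the class ($\overline{\VSX}$, $\overline{\VSY}$ or $\overline{\VSZ}$) of that variable, so I may assume $L_i = V_i$ and $L_j = V_j$ are positive literals over distinct variables $V_i \neq V_j$ (if $V_i = V_j$ the statement is trivial). Set $\VSV' = \VSV - \Braces{V_j}$, so that $\phi'$ is the formula over $\VSV'$ obtained by substituting $V_i$ for every occurrence of $V_j$. The core lemma I would establish is that, since $\phi \models L_i \Leftrightarrow L_j$, every model $v$ of $\phi$ satisfies $v(V_j) = v(V_i)$, and the restriction map $v \mapsto v\rvert_{\VSV'}$ is a bijection from $\Models{}{\phi}$ onto $\Models{}{\phi'}$ whose inverse extends $v'$ by setting $v(V_j) := v'(V_i)$. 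This is the standard substitution lemma together with the observation that, the equivalence holding in all models, forcing $V_j = V_i$ discards no model.

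With this bijection in hand, for each admissible configuration I would verify two facts: the witness sets $\Models{X}{\phi}$ and $\Models{X}{\phi'}$ correspond one-to-one, and for corresponding witnesses $x \leftrightarrow x'$ the counts coincide, $\Count{\phi}xYZ = \Count{\phi'}{x'}YZ$. These two facts immediately yield equality of the optima and a bijection between optimal witnesses, a $\phi$-solution being recovered from a $\phi'$-solution by the consistent extension above --- exactly the claimed reduction. The reason the admissible configurations are precisely (1)--(3) is uniform: we always eliminate $V_j$ and keep $V_i$, and in each of (1)--(3) the eliminated variable $V_j$ sits no higher than $V_i$ in the hierarchy $\VSX \succ \VSY \succ \VSZ$; this is what lets $V_i$ absorb the information of $V_j$ without disturbing the structure of maximising the $\VSY$-count over the witnesses $\VSX$.

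Concretely, when $V_i$ and $V_j$ lie in the same class (case 1), eliminating $V_j$ removes from $\VSX$, $\VSY$ or $\VSZ$ a coordinate that is entirely determined by $V_i$ within that same class; the induced sets are then carried bijectively, $\Induced{\phi}{x} \cong \Induced{\phi'}{x'}$, and the counts are equal (in the $\VSX$ subcase the witness set shrinks but bijects, while in the $\VSY$ and $\VSZ$ subcases the witnesses are literally unchanged). In the mixed cases (2) and (3) the eliminated coordinate lies strictly below $V_i$: if $V_j \in \VSY$ while $V_i \in \VSX$, then for a fixed witness $x$ the value $y(V_j)$ is forced to the constant $x(V_i)$ on all of $\Induced{\phi}{x}$, so deleting that coordinate is injective and leaves the cardinality unchanged; and if $V_j \in \VSZ$ (in case 2 or case 3), the coordinate is bound by the existential quantifier and the consistent extension $z(V_j) := v(V_i)$ always exists, so satisfiability of $\exists z$ is preserved and the $\VSY$-projection is untouched.

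The main obstacle is the bookkeeping in the mixed cases: one must argue carefully that the eliminated coordinate behaves as a constant-per-witness quantity rather than as a free counting dimension, so that removing it neither merges nor splits members of the induced set --- otherwise the cardinality, and hence the count, would change. The same analysis explains why the reverse configurations are excluded: eliminating a variable strictly higher than its equivalent partner (for instance a witness variable $V_j$ replaced by a counting variable $V_i$) would collapse a genuine optimisation or counting dimension, which is exactly the situation that the constant-per-witness argument fails to cover.
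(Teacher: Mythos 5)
The paper states \cref{thm:equiv:simplify} without any proof, so there is no argument of the authors to compare yours against; I can only assess your proposal on its own terms, and it is correct. Your key lemma --- that when $\phi \models V_i \Leftrightarrow V_j$ the restriction map $v \mapsto v\rvert_{\VS{V} - \{V_j\}}$ is a bijection from $\Models{}{\phi}$ onto $\Models{}{\phi'}$, with inverse the consistent extension $v(V_j) := v'(V_i)$ --- is the right engine, and the case analysis correctly tracks what it does to the three ingredients of the objective (the witness set $\Models{X}{\cdot}$, the induced set $\Induced{\phi}{x}$, and the existential over $\VS{Z}$). In particular you correctly isolate the only delicate point: when the eliminated variable $V_j$ is a counting variable, its value on $\Induced{\phi}{x}$ is constant per witness (equal to $x(V_i)$ in case 2) or determined coordinatewise by $y(V_i)$ (case 1 within $\overline{\VSY}$), so deleting that coordinate neither merges nor splits elements and $\Count{\phi}xYZ = \Count{\phi'}{x'}YZ$; and when $V_j \in \VS{Z}$ the consistent extension witnesses the existential in both directions. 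One interpretive choice you make explicit, and which the theorem statement leaves implicit, is that $V_j$ is removed from its variable class in the reduced problem; this is the reading under which the counts match exactly, and your proof depends on it, so it is worth stating as you do. Your closing remark about why the reverse configurations fail (eliminating a variable higher in the hierarchy $\VSX \succ \VSY \succ \VSZ$ merges or splits a genuine counting or optimisation dimension) is not needed for the theorem as stated but is a sound sanity check.
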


\cref{thm:equiv:simplify} can be applied multiple times in order to further simplify the formula.
Literal equivalence can be detected using binary implication graphs~\cite{heule2011unhiding}.

\section{Experimental evaluation}
\label{sec:evaluation}

\cref{alg:cegar} has been implemented in an open-source tool written in C++ called \BaxMC{}~\cite{baxmc},
including dynamic symmetry breaking techniques (\cref{sec:symmetries}) and all the heuristics discussed in \cref{sec:heuristics}.
In this implementation, we only incorporated the approximate version of the algorithm using
\textsc{ApproxMC5} \cite{meel2020approxmc5} as an approximated model counting oracle and
\textsc{CryptoMinisat} \cite{msoos2009cryptominisat} as a \SAT{} solver oracle. An exact solver is
not implemented because we do not, at the time of writing, have another exact \MaxMC{} solver
available as a comparison.

We use three sets of benchmarks, coming either from~\cite{maxcount}, or from MaxSat 2021 competition~\cite{maxsat2021}.
Benchmarks from this later class are transformed using the method from~\cite{fremont2017maxcount}.
\cref{tab:blist} shows more details about the benchmark set considered. Benchmarks annoted with a
star indicate that a symmetry was found.


All experiments are run on a Dell R640 with 40 cores and 192 GB of RAM running Debian 11, with a
2-hour timeout, a 10 GB memory limit and with parameters $\delta = 0.2$, $\epsilon = 0.8$.

\subsection{Comparison to \MaxCount}

\MaxCount{} \cite{fremont2017maxcount} is used as an off-the-shelf solver of the problem, with
parameters corresponding to $\delta = 0.2$, $\epsilon = 0.8$. Note that these are not the parameters
used in the experiments in \cite{fremont2017maxcount} and that we reimplemented \MaxCount{} using
newer oracles. We did this in order to see how \MaxCount{} and \BaxMC{} behave when both are
providing the same correctness guarantees and using the same oracles for fairness. All figures from
\cref{tab:perf:compare} are obtained when \BaxMC{} is used with the \texttt{(leads,rnd)} heuristic
combination.

\cref{tab:perf:compare} shows the results obtained when running both tools on our three benchmarks.
Bolded values are the \emph{best} values on this line (i.e., smaller time or biggest answer).
The \emph{time} columns are the running times of the tools. The \emph{model count} columns are the
values returned by the candidate tools.

One can see that \BaxMC{} outperforms \MaxCount{} in all benchmark timings. In cases where
\BaxMC{} did not find the best value, it terminates when the bounds on the possible maximum are \textit{tight enough}. This yields a small error margin on the
returned value of \BaxMC{}, but is configurable through its $\kappa$ argument.

\begin{table*}
  \caption{Benchmark list}
  \begin{center}
    \begin{tabular}{|c|c|c|c|c|}
      \hline
      Name                              & $\Size{\VSX}$ & $\Size{\VSY}$ & $\Size{\VSZ}$ & Nr. Clauses \\
      \hline
      \texttt{backdoor-32-24}*          & 32            & 32            & 83            & 76          \\
      \hline
      \texttt{backdoor-2x16-8}*         & 32            & 32            & 136           & 272         \\
      \hline
      \texttt{pwd-backdoor}             & 64            & 64            & 272           & 609         \\
      \hline
      \texttt{bin-search-16}            & 16            & 16            & 1416          & 5825        \\
      \hline
      \texttt{CVE-2007-2875}            & 32            & 32            & 720           & 1740        \\
      \hline
      \texttt{CVE-2009-3002}            & 288           & 240           & 443           & 180         \\
      \hline
      \texttt{reverse}                  & 32            & 32            & 165           & 293         \\
      \hline
      \hline
      \texttt{ActivityService}          & 70            & 34            & 4063          & 15257       \\
      \hline
      \texttt{ActivityService2}         & 70            & 34            & 4063          & 15257       \\
      \hline
      \texttt{ConcreteActivityService}  & 71            & 37            & 4728          & 17856       \\
      \hline
      \texttt{GuidanceService}          & 69            & 27            & 3167          & 11612       \\
      \hline
      \texttt{GuidanceService2}         & 69            & 27            & 3167          & 11612       \\
      \hline
      \texttt{IssueServiceImpl}         & 77            & 29            & 3519          & 13024       \\
      \hline
      \texttt{IterationService}         & 70            & 34            & 4063          & 15257       \\
      \hline
      \texttt{LoginService}             & 92            & 27            & 5110          & 21559       \\
      \hline
      \texttt{NotificationServiceImpl2} & 87            & 32            & 5223          & 22006       \\
      \hline
      \texttt{PhaseService}             & 70            & 34            & 4063          & 15257       \\
      \hline
      \texttt{ProcessBean}              & 166           & 39            & 9675          & 41444       \\
      \hline
      \texttt{ProjectService}           & 134           & 48            & 6778          & 24944       \\
      \hline
      \texttt{sign}                     & 16            & 16            & 107           & 392         \\
      \hline
      \texttt{sign\_correct}            & 16            & 16            & 92            & 346         \\
      \hline
      \texttt{UserServiceImpl}          & 87            & 31            & 3901          & 14653       \\
      \hline
      \hline
      \texttt{drmx}                     & 1030          & 17            & 26            & 2094        \\
      \hline
      \texttt{keller4}                  & 43            & 15            & 62            & 2525        \\
      \hline
      \texttt{g2\_n35e34\_n58e61}       & 34            & 7             & 954           & 38130       \\
      \hline
    \end{tabular}
    \label{tab:blist}
  \end{center}
\end{table*}

\begin{table*}
  \caption{Performance comparisons between \BaxMC{} and \MaxCount}
  \begin{center}
    \begin{tabular}{|c||c|c|c||c|c|}
      \hline
      Benchmark name                    & \multicolumn{3}{|c||}{\textsc{BaxMC}} & \multicolumn{2}{|c|}{\MaxCount}                                                    \\
                                        & Time (s)                              & Sym. Time (s)                   & Model count (log) & Time (s) & Model count (log) \\
      \hline
      \texttt{backdoor-32-24}*          & 611.12                                & \textbf{34.50}                  & \textbf{32}       & 231.87   & \textbf{32}       \\
      \hline
      \texttt{backdoor-2x16-8}*         & \textbf{60.02}                        & 61.07                           & \textbf{16}       & 6512.28  & \textbf{16}       \\
      \hline
      \texttt{pwd-backdoor}             & \textbf{236.87}                       & 240.63                          & \textbf{64}       & TO       & -                 \\
      \hline
      \texttt{bin-search-16}            & 1067.38                               & \textbf{1048.43}                & \textbf{16}       & 1490.44  & \textbf{16}       \\
      \hline
      \texttt{CVE-2007-2875}            & \textbf{36.14}                        & 37.39                           & \textbf{32}       & TO       & -                 \\
      \hline
      \texttt{CVE-2009-3002}            & TO                                    & TO                              & -                 & MO       & -                 \\
      \hline
      \texttt{reverse}                  & TO                                    & TO                              & -                 & MO       & -                 \\
      \hline
      \hline
      \texttt{ActivityService}          & \textbf{3060.39}                      & 3064.60                         & \textbf{33.95}    & TO       & -                 \\
      \hline
      \texttt{ActivityService2}         & 3096.54                               & \textbf{2999.72}                & \textbf{33.95}    & TO       & -                 \\
      \hline
      \texttt{ConcreteActivityService}  & \textbf{84.20}                        & 84.44                           & \textbf{36.91}    & TO       & -                 \\
      \hline
      \texttt{GuidanceService}          & \textbf{1468.39}                      & 1474.51                         & \textbf{26.88}    & TO       & -                 \\
      \hline
      \texttt{GuidanceService2}         & \textbf{1459.74}                      & 1474.76                         & \textbf{26.88}    & TO       & -                 \\
      \hline
      \texttt{IssueServiceImpl}         & 1603.21                               & \textbf{1583.50}                & \textbf{28.88}    & TO       & -                 \\
      \hline
      \texttt{IterationService}         & 3081.86                               & \textbf{3068.95}                & \textbf{33.95}    & TO       & -                 \\
      \hline
      \texttt{LoginService}             & 5275.25                               & \textbf{5197.84}                & \textbf{26.92}    & TO       & -                 \\
      \hline
      \texttt{NotificationServiceImpl2} & \textbf{1286.48}                      & 1287.94                         & \textbf{31.91}    & TO       & -                 \\
      \hline
      \texttt{PhaseService}             & \textbf{3071.86}                      & 3105.18                         & \textbf{33.95}    & TO       & -                 \\
      \hline
      \texttt{ProcessBean}              & TO                                    & TO                              & -                 & TO       & -                 \\
      \hline
      \texttt{ProjectService}           & 5770.26                               & \textbf{5544.02}                & \textbf{47.92}    & TO       & -                 \\
      \hline
      \texttt{sign}                     & 73.56                                 & \textbf{73.43}                  & 15.90             & 819.58   & \textbf{16}       \\
      \hline
      \texttt{sign\_correct}            & 74.58                                 & \textbf{73.78}                  & 15.89             & 819.56   & \textbf{16}       \\
      \hline
      \texttt{UserServiceImpl}          & TO                                    & TO                              & -                 & TO       & -                 \\
      \hline
      \hline
      \texttt{drmx}                     & 24.39                                 & \textbf{24.07}                  & \textbf{16.99}    & TO       & -                 \\
      \hline
      \texttt{keller4}                  & TO                                    & TO                              & -                 & TO       & -                 \\
      \hline
      \texttt{g2\_n35e34\_n58e61}       & \textbf{0.17}                         & 0.41                            & \textbf{2.53}     & TO       & -                 \\
      \hline
    \end{tabular}
    \label{tab:perf:compare}
  \end{center}
\end{table*}

\subsection{Decision heuristic comparison}
\label{subsec:expe:heur}

\cref{tab:perf:heur} shows a comparison between the heuristics that are currently
available in \BaxMC{}.
Lines enumerate the decision heuristics from \cref{subsec:decisionheurs}. Columns specify
heuristics used by the underlying \SAT{} oracle about literals polarities.

Each cell of this table contains, in sequence: the total running time, the number of time
this combination ran the fastest compared to all others, and the number of times this combination
timed out. For example combination \texttt{(leads,cache)} ran for a total time of 62968.38~seconds
with 7~timeouts, and ran the fastest on 3~benchmarks over a total number of 26. In this setup, any
time-out from \BaxMC{} increases the total running time by 7200s.

The table shows that none of the heuristics stands out. We can only eliminate random decision
as a bad heuristic. Nevertheless, the combination of heuristics allows to strongly reduce the overall
number of timeouts.

\begin{table*}
  \caption{Performance comparison between heuristics of \BaxMC{}}
  \begin{center}
    \begin{tabular}{|c|*{4}{|c}|}
      \hline
                               & \texttt{cache}                       & \texttt{neg}                & \texttt{pos}                                  & \texttt{rnd}         \\
      \hline
      \hline
      \texttt{leads}           & 62968.38 -- \textbf{3} -- 7          & 65542.94 -- 2 -- 6          & 65222.40 -- 1 -- \textbf{4}                   & 67233.96 -- 0 -- 5   \\
      \hline
      \texttt{rnd}             & 144081.73 -- 0 -- 19                 & 139002.15 -- 0 -- 18        & 140755.04 -- 0 -- 17                          & 137407.70 -- 0 -- 17 \\
      \hline
      \texttt{none}            & 60368.28 -- \textbf{3} -- 5          & 62729.76 -- 2 -- \textbf{4} & 61317.54 --
      \textbf{3} -- \textbf{4} & 56860.50 -- \textbf{3} -- \textbf{4}                                                                                                      \\
      \hline
      \texttt{vsids}           & 69165.19 -- 2 -- 8                   & 56189.26 -- 1 -- 5          & \textbf{54017.07} -- \textbf{3} -- \textbf{4} & 63865.03 -- 2 -- 6   \\
      \hline
    \end{tabular}
    \label{tab:perf:heur}
  \end{center}
\end{table*}

\section{Related works}
Previous works on \MaxMC{} solving may be classified into three categories, based respectively on probabilistic solving
as in \MaxCount~\cite{fremont2017maxcount}, exhaustive search~\cite{audemard2022softcores} and
knowledge compilation~\cite{oztok2016sddemajsat}.

\emph{Probabilistic solving} relies on ``amplification'' to build a new formula
$\tilde{\phi}(\VS{X}, \VS{Y}, \VS{Z}) = \bigwedge_{i=1}^k \phi(\VS{X}, \VSY_i, \VSZ_i)$, where the
$\VSY_i$ and $\VSZ_i$ are fresh copies of the initial $\VS{Y}$ and $\VS{Z}$ variables, and uniformly
sampling among $\Models{X}{\tilde{\phi}}$.
The higher the $k$, the more the sampling is attracted towards the $\VS{X}$ with large projected model
counting over $(\VSY_i)_{i\leq k}$.
Given parameters $\epsilon$ and $\delta$, the guarantees provided about the returned tuple $(\tilde{n},
  \tilde{x})$ are the same as in \cref{cor:main:maxcount} \cite{fremont2017maxcount}.
Unfortunately, when the size of the formula increases, uniform sampling may become quite expensive as shown in our benchmarks.
Furthermore, this approach is not incremental: looking for a better solution involves re-running the search from scratch.

On the other side of the spectrum lie \emph{exhaustive searches}. The idea here is to make incremental
decisions among the variables in $\VS{X}$, propagating the decision in $\phi$, and simplifying the
formula in order to cache some results \cite{audemard2022softcores}.
Such approaches are exact, but their exhaustive nature limits their scalability.
Component caching \cite{bacchus2003dpllcaching} is a practical way to improve scalability~\cite{audemard2022softcores}
and it could be beneficial into our algorithm too.

\emph{Knowledge compilation} consists in compiling the formula into a representation over which solving the problem (here, the optimal model counting)
is expected to be much easier.
Compilation times tend to dominate and the memory usage of the compiled form may be huge.

A possible approach could use a generalization of $\VS{X}$-constrained SDDs
\cite{oztok2016sddemajsat}. The idea here would be to build $(\VS{X},\VS{Y})$-constrained SDDs, that
is SDDs that are $\VS{X}$-constrained, and for which each subtree that are not over $\VS{X}$ are
$\VS{Y}$-constrained. In this case, one can easily compute the count of every possible pair
$x \in \Models{X}{\phi}$ and then propagate the maximum to the root of the tree. To the best of our
knowledge, this direction has not been explored yet.

\section{Conclusion and future work}

We proposed a CEGAR based algorithm allowing to solve medium-sized instances of the \MaxMC{} within
reasonable time limits, as illustrated in our experiments. This algorithm allows either to compute
exact solutions (when possible), or can be smoothly relaxed to produce approximated results, under
well-defined probabilistic guarantees. Comparisons with an existing probabilistic tool showed the
gains provided by our algorithm on concrete examples. Our implementation and all the related
benchmarks are available on~\cite{baxmc}.

From an algorithmic point of view this work could be extended in several directions.

First, we exploited some classes of symmetries when solving  \MaxMC{} (\cref{sec:symmetries}).
This could be improved by detecting new kinds of symmetries~\cite{devriendt2016breakid}, or exploiting them further
using techniques such as symmetry propagation~\cite{metin2019esbppropag}.

As discussed in \cref{sec:generalization}, our relaxation algorithm (\cref{alg:elimination}) uses a \textit{linear sweep}
over the literals composing a witness.
Instead of returning one possible minimal relaxation,
\textsc{MergeXPlain}~\cite{shchekotykhin2015mergexplain} returns multiple ones, which may be helpful
in our case by allowing the creation of multiple blocking clauses.

As expected, in some instances, our algorithm may degenerate into exhaustive search. While we do not
know yet any characterization of all such instances, we believe that pre-processing and
in-processing~\cite{jarvisalo2012inprocessing} techniques such as
\textsc{unhiding}~\cite{heule2011unhiding} should improve performances and limit the set of inefficient
instances.

Finally, \cref{alg:cegar} may be parallelized by correctly scheduling search
spaces among threads, possibly using the leads described in \cref{subsec:leads}. If we
enforce the fact that all leads currently present in the lead list are disjoint, that is the
$\EquivVS{\tilde{x}}{E}$ are pairwise disjoint (hence splitting the search space into
parts), we expect a favorable parallelization setting.

\bibliographystyle{IEEEtran}
\bibliography{IEEEabrv,biblio}
\end{document}